\newlength{\arrow}
\newcommand*{\N}{\mathbb{N}}
\newcommand*{\R}{\mathbb{R}}
\DeclareMathOperator{\yes}{YES}
\DeclareMathOperator{\no}{NO}
\newcommand*{\Q}{\mathbb{Q}}
\newcommand{\ra}{\rightarrow}
\newcommand{\zo}{\{0,1\}}
\newcommand{\zost}{\{0,1\}^*}
\newcommand{\zinf}{\{0,1\}^\infty}
\renewcommand{\b}[1]{\b{#1}}
\renewcommand{\b}[1]{\boldsymbol{#1}}
\DeclareMathOperator{\adim}{adim}
\DeclareMathOperator{\Dim}{Dim}
\DeclareMathOperator{\aDim}{aDim}
\DeclareMathOperator{\AVG}{AVG}
\newtheorem{theorem}{Theorem}
\newtheorem{observation}[theorem]{Observation}
\newtheorem{corollary}{Corollary}
\newtheorem{proposition}{Proposition}
\newcommand{\tab}{\hspace*{5mm}}
\numberwithin{equation}{section}
\DeclareMathOperator{\constr}{alg}
\newtheorem*{mydef}{Definition}
\begin{document}

\title{\LARGE \bf
Algorithmic Dimensions via Learning Functions  \thanks{This research was supported in part by National Science Foundation research grants 1545028 and 1900716.}
}


\author[1]{Jack H. Lutz}
\author[2]{Andrei N. Migunov}

\affil[1]{Iowa State University, Ames, IA 50011 USA \\
    \texttt{lutz@iastate.edu}}
    
\affil[2]{Drake University, Des Moines, IA 50311 USA \\ 
    \texttt{andrei.migunov@drake.edu}}
 
\date{}

\maketitle





\begin{abstract}
We characterize the algorithmic dimensions (i.e., the lower and upper asymptotic densities of information) of infinite binary sequences in terms of the inability of learning functions having an algorithmic constraint to detect patterns in them. Our pattern detection criterion is a quantitative extension of the criterion that Zaffora Blando used to characterize the algorithmically random (i.e., Martin-L\"{o}f random) sequences. Our proof uses Lutz's and Mayordomo's respective characterizations of algorithmic dimension in terms of gales and Kolmogorov complexity.
\end{abstract}

\section{Introduction}

Algorithmic dimension was first formulated as a $\Sigma^0_1$ effectivization of classical Hausdorff dimension \cite{lutz2000gales,jLutz03}.\footnote{Algorithmic dimension has also been called ``constructive dimension,'' ``effective Hausdorff dimension,'' and ``effective dimension'' by various authors.}  The algorithmic dimension $\adim(\Gamma)$ of a set $\Gamma$ of infinite binary sequences is, in fact, an upper bound of the Hausdorff dimension $\dim_H(\Gamma)$ of this set.  Since algorithmic dimension has the \emph{absolute stability} property that $\adim(\Gamma)$ is always the supremum of all $\adim(\{X\})$ for $X \in \Gamma$, it was natural to define $\dim(X) = \adim(\{X\})$ for all infinite binary sequences $X$ and to investigate algorithmic dimension entirely in terms of the dimensions $\dim(X)$ of individual sequences $X$.  Mayordomo \cite{jMayo02} proved that the dimension $\dim(X)$, originally defined in terms of algorithmic betting strategies called \emph{gales}, can also be characterized as the lower asymptotic density of the algorithmic information content of $X$.  The more recent \emph{point-to-set principle} \cite{lutz2018algorithmic} uses relativization to give the characterization
$$\dim_H(\Gamma) = \min_{A \subseteq \N } \sup_{X \in \Gamma} \dim^A(X)$$
of classical Hausdorff dimension.  This principle has enabled several recent uses of computability theory to prove new classical theorems about Hausdorff dimension\cite{Nlut171,NluStu17,LuSt2018,Sla21,lutz2021point,downey2019algorithmic,downey2019computability,LuLu2020,lutz2021algorithmic}.\footnote{A theorem is ``classical'' here if its statement does not involve computability or related aspects of mathematical logic.  Hence ``new classical theorem'' is not an oxymoron.}

Algorithmic dimension has the same $\Sigma^0_1$ ``level of effectivization'' as algorithmic randomness (also called ``Martin-L\"{o}f randomness'' \cite{Mart66} or, simply, ``randomness'').  In fact, every algorithmically random sequence $X$ satisfies $\dim(X) = 1$, although the converse does not hold \cite{jLutz03}.

Computable learning theory, as initiated by Gold in 1967 \cite{gold1967language,osherson1986systems,jain1999systems}, has been used to shed new light on randomness notions.  Specifically, in 2008, Osherson and Weinstein \cite{osherson2008recognizing} characterized two randomness notions, called weak 1-randomness and weak 2-randomness, for sequences $X$ in terms of the inability of computable learning functions to detect patterns in $X$.  Even more compellingly, Zaffora Blando \cite{blando2021learning} recently formulated a clever variant of Osherson and Weinstein's pattern detection criteria, called uniform weak detection, and used this to give an exact characterization of algorithmic (i.e., Martin-Löf) randomness.\footnote{Zaffora Blando's paper also characterized Schnorr randomness in terms of "computably uniform weak detection," but this is not germane to our work here.}

In this paper we introduce a quantitative version of Zaffora Blando's uniform weak detection criterion, called {\it $s$-learnability}, and we use this to characterize algorithmic dimension in terms of learning functions.  Our main theorem says that, for every infinite binary sequence $X$, $\dim(X)$ is the infimum of all nonnegative real numbers s for which some learning function $s$-learns X. Our proof of this result uses methods of Osherson, Weinstein, and Zaffora Blando, together with martingale and Kolmogorov complexity techniques of Mayordomo \cite{jMayo02}. We also characterize both the classical \emph{packing dimension} $\dim_P$ \cite{sullivan1984entropy,tric1982} and the \emph{algorithmic strong dimension} $\Dim(X)$ \cite{AHLM07} of a sequence in terms of learning functions. Along the way, we show that algorithmic randomness can also be characterized by specifically \textit{polynomial-time} computable learning functions.

\section{Preliminaries}
Let $\N$ represent the natural numbers $\{0,1,2,...\}$, $\Q$ the rationals, and $\R$ the reals. We will often use the extended naturals $\N \cup \{\infty\}$ and the extended reals $\R \cup \{\infty\}$. More often, we will refer to the respectively half open and closed intervals, $[0,\infty)$ and $[0,\infty]$.

We denote by $\zost$ the set of all (finite) binary strings, and by $\zo^\infty$ the set of all infinite binary sequences, which we call the \emph{Cantor space} $\textbf{C}$. We denote the length in bits of a string or sequence $w$ by $|w|$. The \emph{empty string} is the unique string $\lambda$ with $|\lambda| = 0$. If $Z$ is an element of $\zo^\infty$ or of $\zost$, we write $Z \upharpoonright n$ for the first $n$ bits of $Z$ if $|Z| \geq n$, and the value is undefined otherwise . Note that for all $n \in \N$, and all $Z \in \zo^\infty$, $|Z \upharpoonright n| = n$. We write $w \sqsubseteq Z$ if $w$ is a \textit{prefix} of $Z$, i.e., if $w = Z \upharpoonright |w|$. We write $w \sqsubset Z$ if $w$ is a \textit{proper prefix} of $Z$, i.e., if $w$ is a prefix of $Z$ and $w \neq Z$.

For any string $w \in \zost$, the \emph{cylinder} at $w$ is $$C_w = \{Z \in \textbf{C} \text{ | } w \sqsubseteq Z\}.$$ If $A$ is a set of strings, we denote the union of cylinders at those strings by $\llbracket A \rrbracket = \cup_{w \in A} C_w$. A (Borel) probability measure on \textbf{C} is a function $\mu:\zost \ra [0,1]$ such that $\mu(\lambda) = 1$ and $\mu(w) = \mu(w0)+\mu(w1)$ for all $w \in \zost$. Intuitively, $\mu(w)$ is the probability that $Z \in C_w$ when $Z \in \textbf{C}$ is  ``chosen according to $\mu$''. In this sense, $\mu(w)$ is an abbreviation for $\mu(C_w)$. Standard methods \cite{billingsley1995probability} extend $\mu$ from cylinders to a $\sigma$-algebra $\mathcal{F}$ on \textbf{C}, so that $(\textbf{C},\mathcal{F},\mu)$ is a probability space in the classical sense.

Most of our attention is on the uniform (Lebesgue) probability measure $\lambda$ on \textbf{C} defined by $\lambda(w) = 2^{-|w|}$ for all $w \in \zost$. We rely on context to distinguish Lebesgue measure from the empty string.

\begin{mydef}
A function $f:\zost \ra [0,\infty)$ is lower semicomputable if there exists a computable function $g:\zost \times \N \ra \Q \cap [0,\infty)$ such that for all $w \in \zost, t \in \N$,$$g(w,t) \leq g(w, t+1) \leq f(w)$$ and $$\lim_{t \ra \infty} g(w,t) = f(w).$$
\end{mydef}

We will say that a set $S$ of real numbers is \emph{uniformly left computably enumerable (uniformly left c.e.)} if there exists a lower-semicomputable function whose range is $S$.

\section{Algorithmic randomness via learning functions}
The algorithmic randomness of a sequence was originally defined in \cite{Mart66} in terms of algorithmic measure theory. In this view, an algorithmically random sequence is one which belongs to every algorithmically definable measure one set. Martin-L\"{o}f shows that all algorithmically nonrandom sequences belong to one universal algorithmically measure-zero set.

\begin{mydef}\cite{Mart66}
If $\mu$ is a probability measure on \textbf{C}, then we say a set $X$ has \emph{algorithmic $\mu$-measure zero} if there exists a computable function $g: \N \times \N \ra \zost$ such that: for every $k\in \mathbb{N}$,
  $$
   X\subseteq \bigcup_{n=0}^{\infty}C_{g(k,n)}
  $$
  and
  
   $$\sum_{n=0}^{\infty}\mu(C_{g(k,n)})\leq 2^{-k}.$$
  
\end{mydef}

\begin{mydef}\cite{Mart66}
  We say a sequence $S$ is \emph{Martin-L\"{o}f $\mu$-nonrandom} if $\{S\}$ has algorithmic $\mu$-measure zero, and \emph{Martin-L\"{o}f $\mu$-random} otherwise.
\end{mydef}

When the probability measure $\mu$ is the Lebesgue measure $\lambda$ defined in section 2, we omit it from the terminology in the preceding two definitions. Randomness can also be characterized using gambling strategies called \emph{gales}. 

\begin{mydef}\label{galedef} \cite{jLutz03}
For $s \in [0,\infty)$, a \emph{$\mu$-$s$-gale} is a function $d:\zost \ra [0,\infty)$ that satisfies the condition that $$d(w)\mu(w)^s = d(w0)\mu(w0)^s + d(w1)\mu(w1)^s,$$ for every $w \in \zost$. 
\end{mydef}

Sometimes, when the probability distribution $\mu$ is clear from context, we will refer simply to $s$-gales. A \emph{martingale} is a $1$-gale. If not stated explicitly otherwise, we assume that the \emph{initial capital} of a gale is $d(\lambda) = 1$.

\begin{mydef}
A $\mu$-$s$-gale $d$ \emph{succeeds} on a set $\Gamma$ of sequences if $$\limsup_{n \ra \infty} d(X \upharpoonright n) = \infty$$ for every sequence $X \in \Gamma.$
\end{mydef}

Ville shows the following:

\begin{theorem}\label{villeorig} \cite{Ville39} Let $\lambda(E)$ denote the Lebesgue measure of a set $E \subseteq \{0,1\}^\infty$. The following are equivalent:\\
(1.)  $\lambda(E) = 0$\\
(2.) There exists a martingale $d: \{0,1\}^* \rightarrow [0,\infty)$ that succeeds on $E$. 
\end{theorem}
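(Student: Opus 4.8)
The plan is to establish the two implications separately, exploiting in each direction the standard dictionary between sets of strings and the ``covering'' martingales they induce. Throughout, for a set $A \subseteq \zost$ I will use the function
$$d_A(w) \;=\; \frac{\lambda(\llbracket A \rrbracket \cap C_w)}{\lambda(C_w)} \;=\; 2^{|w|}\,\lambda(\llbracket A \rrbracket \cap C_w),$$
which one checks directly is a martingale (it satisfies $d_A(w) = \tfrac12(d_A(w0)+d_A(w1))$), has initial capital $d_A(\lambda) = \lambda(\llbracket A \rrbracket)$, and satisfies $d_A(w) = 1$ whenever $w$ extends a string in $A$, since then $C_w \subseteq \llbracket A \rrbracket$.

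For the implication $(1) \Rightarrow (2)$, assume $\lambda(E) = 0$. By the definition of Lebesgue measure, for every $k \in \N$ there is a set $A_k \subseteq \zost$ with $E \subseteq \llbracket A_k \rrbracket$ and $\lambda(\llbracket A_k \rrbracket) \le 2^{-k}$. Put $d = \sum_{k=1}^\infty d_{A_k}$. Since $0 \le d_{A_k}(w) \le 2^{|w|}\lambda(\llbracket A_k\rrbracket) \le 2^{\,|w|-k}$, the series converges pointwise with $d(w) \le 2^{|w|} < \infty$, so $d$ is a well-defined nonnegative function; being an absolutely convergent sum of martingales it is again a martingale, with $d(\lambda) = \sum_{k\ge 1} \lambda(\llbracket A_k\rrbracket) \le 1$ (if an initial capital of exactly $1$ is wanted, add the constant martingale with value $1 - d(\lambda)$). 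Finally, if $X \in E$ then for each $k$ there is $n_k$ with $X \upharpoonright n_k \in \llbracket A_k \rrbracket$, hence $d_{A_k}(X \upharpoonright n) = 1$ for all $n \ge n_k$, and therefore $d(X \upharpoonright n) \ge \#\{k : n_k \le n\} \to \infty$ as $n \to \infty$. In particular $\limsup_n d(X \upharpoonright n) = \infty$, so $d$ succeeds on $E$.

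For the implication $(2) \Rightarrow (1)$, assume a martingale $d$ with $d(\lambda)=1$ succeeds on $E$ (rescale $d$ if necessary). The engine is Ville's maximal inequality: for every $\beta > 0$,
$$\lambda\big(\{X \in \textbf{C} : \textstyle\sup_{n} d(X \upharpoonright n) > \beta\}\big) \;\le\; \frac{d(\lambda)}{\beta}.$$
To prove it, let $A_\beta$ consist of the $\sqsubseteq$-minimal strings $w$ with $d(w) > \beta$; then $A_\beta$ is prefix-free and $\{X : \sup_n d(X\upharpoonright n) > \beta\} = \llbracket A_\beta \rrbracket$. Applying the martingale identity $\sum_{|w|=m}\lambda(w)d(w) = d(\lambda)$ at each level $m$ and letting the finite truncations $\{w \in A_\beta : |w| \le m\}$ exhaust $A_\beta$ gives $\sum_{w \in A_\beta}\lambda(w)\,d(w) \le d(\lambda)$; since $d(w) > \beta$ on $A_\beta$, this yields $\beta\,\lambda(\llbracket A_\beta\rrbracket) \le d(\lambda)$, which is the claimed bound. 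Because $d$ succeeds on $E$, every $X \in E$ satisfies $\sup_n d(X\upharpoonright n) = \infty > \beta$, so $\lambda(E) \le d(\lambda)/\beta$ for all $\beta > 0$; letting $\beta \to \infty$ gives $\lambda(E) = 0$.

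The only genuinely load-bearing step is Ville's maximal inequality in the second direction; everything else is bookkeeping with cylinders and convergence of nonnegative series. Within that step, the point needing a little care is the passage from the level-wise identity $\sum_{|w|=m}\lambda(w)d(w)=d(\lambda)$ to the inequality $\sum_{w\in A_\beta}\lambda(w)d(w)\le d(\lambda)$ for a prefix-free set $A_\beta$ of possibly unbounded word length, which is handled by monotone convergence over the finite truncations of $A_\beta$.
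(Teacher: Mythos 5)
Your proof is correct. Note, however, that the paper does not prove Theorem \ref{villeorig} at all: it is stated purely as a cited classical result from Ville, so there is no in-paper argument to compare against. Your argument is the standard one. The forward direction builds a success martingale as a convergent sum $\sum_k d_{A_k}$ of conditional-measure martingales for nested covers $A_k$ of total measure $\le 2^{-k}$, and the reverse direction is Ville's (Kolmogorov's) maximal inequality obtained via the prefix-free set $A_\beta$ of minimal strings where $d$ first exceeds $\beta$, with the level-$m$ mass identity $\sum_{|w|=m}\lambda(w)d(w)=d(\lambda)$ and monotone convergence. Both steps are carried out carefully, including the two points that typically get glossed over: boundedness of the series $\sum_k d_{A_k}(w)\le 2^{|w|}$, and the passage from bounded-length truncations of $A_\beta$ to the full (possibly infinite) prefix-free set. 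One small remark: you invoke the equivalence $\sup_n d(X\upharpoonright n)=\infty \Leftrightarrow \limsup_n d(X\upharpoonright n)=\infty$ implicitly when applying the maximal inequality to the success set; this is immediate (if the $\limsup$ were finite the whole sequence would be bounded), but it is worth a one-line mention since the paper's definition of success uses $\limsup$ rather than $\sup$.
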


Schnorr effectivizes Ville's theorem as follows:

\begin{theorem}\label{schnorrthm} \cite{DBLP:journals/mst/Schnorr71} A set of sequences $\Gamma$ is Martin-L\"{o}f random if and only if there exists no lower-semicomputable martingale that succeeds on $\Gamma$.
\end{theorem}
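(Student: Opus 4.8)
The plan is to obtain Theorem~\ref{schnorrthm} as the effectivization of Ville's theorem (Theorem~\ref{villeorig}): I run the two implications of Ville's theorem while tracking computability, so that ``martingale'' becomes ``lower-semicomputable martingale'' and ``$\lambda(E)=0$'' becomes ``$E$ has algorithmic $\lambda$-measure zero.'' Reading ``$\Gamma$ is Martin-L\"of random'' as ``$\Gamma$ does not have algorithmic $\lambda$-measure zero,'' the statement to prove is: \emph{$\Gamma$ has algorithmic $\lambda$-measure zero if and only if some lower-semicomputable martingale succeeds on $\Gamma$.} Here a martingale is a $\lambda$-$1$-gale (Definition~\ref{galedef}), i.e.\ a function $d\colon\zost\to[0,\infty)$ with $d(w)=\tfrac12(d(w0)+d(w1))$; the initial-capital convention $d(\lambda)=1$ plays no essential role, only $d(\lambda)\le 1$ (equivalently, success is scale-invariant). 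Write $S^\infty[d]=\{X\in\textbf{C}:\limsup_n d(X\ur n)=\infty\}$, so that $d$ succeeds on $\Gamma$ exactly when $\Gamma\subseteq S^\infty[d]$.

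First I would show that a lower-semicomputable martingale yields a test. Let $d$ be lower-semicomputable with computable nondecreasing approximation $g(w,t)\to d(w)$, and recall $d(\lambda)\le 1$. The classical ingredient is Kolmogorov's inequality: $w\mapsto d(w)\lambda(w)$ is the restriction to cylinders of a Borel measure of total mass $d(\lambda)\le 1$, so for any antichain $A\subseteq\zost$ we have $\sum_{w\in A}d(w)\lambda(w)\le 1$; taking $A$ to be the prefix-minimal strings on which $d$ exceeds $2^k$ gives $\lambda(\{X:\sup_n d(X\ur n)>2^k\})\le 2^{-k}$. The class $U_k=\{X:\exists n\ d(X\ur n)>2^k\}=\llbracket\{w:d(w)>2^k\}\rrbracket$ is effectively open uniformly in $k$, since ``$d(w)>2^k$'' is semidecided by searching for $t$ with $g(w,t)>2^k$; a routine disjointification then writes $U_k=\bigcup_n C_{g'(k,n)}$ with $g'$ computable and $\sum_n\lambda(C_{g'(k,n)})\le 2^{-k}$, so $\bigcap_k U_k$ has algorithmic $\lambda$-measure zero. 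Finally $S^\infty[d]\subseteq\bigcap_k U_k$, because if $\limsup_n d(X\ur n)=\infty$ then for every $k$ some prefix $w\sqsubseteq X$ has $d(w)>2^k$, so $X\in U_k$. Hence every $\Gamma$ on which $d$ succeeds is contained in $\bigcap_k U_k$ and so has algorithmic $\lambda$-measure zero.

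Next I would prove the converse, constructing a martingale from a test. Suppose $\Gamma$ has algorithmic $\lambda$-measure zero via computable $g\colon\N\times\N\to\zost$, with $\Gamma\subseteq U_k:=\bigcup_n C_{g(k,n)}$ and $\sum_n\lambda(C_{g(k,n)})\le 2^{-k}$ for every $k$. For $k\ge 1$ put $d_k(w)=\lambda(U_k\cap C_w)/\lambda(C_w)=2^{|w|}\lambda(U_k\cap C_w)\in[0,1]$. A one-line computation gives $d_k(w)=\tfrac12(d_k(w0)+d_k(w1))$, so $d_k$ is a martingale with $d_k(\lambda)=\lambda(U_k)\le 2^{-k}$, and it is lower-semicomputable uniformly in $k$ because $\lambda(U_k\cap C_w)=\sup_N\lambda((\bigcup_{n\le N}C_{g(k,n)})\cap C_w)$ is an increasing limit of computable rationals. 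If $X\in U_k$ then some generator of $U_k$ is a prefix $v\sqsubseteq X$, so $C_w\subseteq C_v\subseteq U_k$ and hence $d_k(w)=1$ for all $w$ with $v\sqsubseteq w\sqsubseteq X$; thus $d_k(X\ur n)\to 1$. Now set $d=\sum_{k\ge 1}d_k$: this is a martingale (the gale identity is linear), lower-semicomputable as an effective sum of uniformly lower-semicomputable functions, with $d(\lambda)=\sum_k d_k(\lambda)\le\sum_{k\ge1}2^{-k}=1$ and $d(w)\le 2^{|w|}<\infty$ everywhere. For every $X\in\Gamma$ and every $K$, $d(X\ur n)\ge\sum_{k=1}^{K}d_k(X\ur n)\to K$, so $\lim_n d(X\ur n)=\infty$; thus $d$ succeeds on $\Gamma$.

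The underlying mathematics is just the classical proof of Ville's theorem, so I expect the only real care to be the bookkeeping that keeps each object $\Sigma^0_1$. Two points deserve attention. In the martingale-to-test direction $d$ is merely lower-semicomputable, so a threshold ``$d(w)>2^k$'' can only be detected through the approximation $g$; this is harmless precisely because success forces $\limsup_n d(X\ur n)=\infty$, which strictly overshoots each $2^k$, allowing $g$ to catch up. In the test-to-martingale direction one must verify that the infinite sum $d=\sum_k d_k$ is finite-valued and uniformly lower-semicomputable, which follows from the geometric bound $d_k(\lambda)\le 2^{-k}$ together with the pointwise bound $d_k\le 1$. The remaining ancillary step --- rewriting an effectively open set of measure at most $2^{-k}$ in the exact ``$\sum_n\mu(C_{g(k,n)})\le 2^{-k}$'' form of our definition of algorithmic $\lambda$-measure zero --- is the standard disjointification of a c.e.\ union of cylinders into a c.e.\ antichain, under which the total measure does not increase.
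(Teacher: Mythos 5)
The paper does not prove Theorem~\ref{schnorrthm}: it is cited directly from Schnorr as a known characterization, so there is no in-paper proof to compare against. Your argument is the standard textbook proof of Schnorr's theorem (as in Downey–Hirschfeldt or Li–Vit\'anyi), and it is correct. In the martingale-to-test direction the key points are Kolmogorov's inequality applied to the prefix-minimal antichain where $d$ first exceeds $2^k$, the observation that $\{w : d(w) > 2^k\}$ is c.e.\ uniformly in $k$ because $d$ is lower-semicomputable, and the routine disjointification into a c.e.\ prefix set so as to hit the exact form of the paper's Definition of algorithmic $\lambda$-measure zero. In the test-to-martingale direction, setting $d_k(w) = 2^{|w|}\lambda(U_k\cap C_w)$ gives a uniformly lower-semicomputable martingale with $d_k(\lambda)\le 2^{-k}$ that reaches $1$ along every $X\in U_k$, and summing over $k$ yields a lower-semicomputable martingale succeeding on $\Gamma$; the bound $d_k(w)\le\min(1,\,2^{|w|-k})$ makes the sum finite at every $w$. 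One small remark: your construction yields $d(\lambda)\le 1$ rather than the paper's default normalization $d(\lambda)=1$, and a naive renormalization is obstructed since $d(\lambda)$ need only be lower-semicomputable. You flag this yourself, and it is indeed harmless: the other direction of the proof (Kolmogorov's inequality) and the success criterion are indifferent to the exact initial capital as long as it is positive and bounded by $1$, so the equivalence in the theorem is unaffected.
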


We will revisit gales in a later section, when we discuss dimension.

\begin{mydef} \cite{osherson1986systems}
A function $l:\zost \ra \{\yes,\no\}$ is called a \emph{learning function}.
\end{mydef}

$\yes$ and $\no$ are simply aliases for 1 and 0, respectively.

One can impose resource bounds on learning functions or on properties of these functions such as their average answers `along' a string $w$. $l$ may be computable (in which case we call it a computable learning function or \emph{CLF}), or $l$ may have lower-semicomputable averages at all points in $\zost$, or any number of other resource restrictions.

\begin{mydef}\label{uwd} \cite{blando2021learning}
A learning function $l$ is said to \emph{uniformly weakly detect} that a sequence $X \in \zinf$ is patterned if and only if\\
\tab 1. $l(X \upharpoonright m) = \yes$ for infinitely many $m \in \N$, and\\
\tab 2. $\lambda(\{ Y \in \zo^\infty \mid \#\{m\in \N \mid l(Y \upharpoonright m) = \yes \} \geq n \}) \leq 2^{-n}$\\\tab for all $n \in \N$.
\end{mydef}

Zaffora Blando shows that computable learning functions and uniform weak detectability characterise Martin-L\"of randomness:

\begin{theorem}\label{zbmain} \cite{blando2021learning}
A sequence $X \in \zinf$ is Martin-L\"of random if and only if there is no computable learning function that uniformly weakly detects that $X$ is patterned.
\end{theorem}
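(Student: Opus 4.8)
The plan is to prove both implications by contraposition, translating in each direction between Martin-L\"of tests and learning functions through one bookkeeping quantity: for a learning function $l$ and a sequence $Y$, let $\nu_l(Y) = \#\{m \in \N \mid l(Y\upharpoonright m) = \yes\} \in \N \cup \{\infty\}$ count the $\yes$-answers $l$ returns along $Y$. Condition~2 of uniform weak detection (Definition~\ref{uwd}) says exactly that $\lambda(\{Y \mid \nu_l(Y) \geq n\}) \leq 2^{-n}$ for every $n$, and condition~1 says $\nu_l(X) = \infty$. So the theorem asserts that these two conditions are jointly achievable by some \emph{computable} $l$ precisely when $X$ is Martin-L\"of nonrandom, and the proof consists of moving between such an $l$ and a Martin-L\"of test for $X$.

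For the easy direction, suppose a computable learning function $l$ uniformly weakly detects that $X$ is patterned; the goal is to show $\{X\}$ has algorithmic measure zero. Put $G_n = \{Y \in \textbf{C} \mid \nu_l(Y) \geq n\}$. Since $l$ is computable, $B_n = \{w \in \zost \mid \#\{m \leq |w| \mid l(w\upharpoonright m) = \yes\} \geq n\}$ is decidable uniformly in $n$, and a one-line argument gives $G_n = \llbracket B_n\rrbracket$. Condition~2 gives $\lambda(\llbracket B_n\rrbracket) \leq 2^{-n}$, and condition~1 gives $X \in \bigcap_n G_n$. Passing to the $\sqsubseteq$-minimal elements of $B_n$ (still decidable, now prefix-free, same union of cylinders) and listing them in order of length yields a computable $g : \N \times \N \to \zost$ of the kind required in the definition of algorithmic $\mu$-measure zero, witnessing that $\{X\}$ has algorithmic measure zero; hence $X$ is not Martin-L\"of random.

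For the converse, suppose $X$ is Martin-L\"of nonrandom; the goal is to construct a computable learning function uniformly weakly detecting that $X$ is patterned. First I would put a test for $X$ into normal form. From the computable $g$ furnished by the definition of algorithmic measure zero, let $U_k = \bigcup_n C_{g(k,n)}$, so that $(U_k)$ is a uniformly c.e.\ sequence of open sets with $\lambda(U_k) \leq 2^{-k}$ and $X \in \bigcap_k U_k$. A routine padding trick (pad the $i$-th enumerated generator out to length $\geq i$) rewrites each $U_k$ as $\llbracket E_k\rrbracket$ with $E_k$ \emph{decidable} uniformly in $k$ and the same measure, and then $D_k = \{\, v \in \zost \mid v \text{ has a prefix in } E_j \text{ for each } j \leq k \,\}$ is decidable uniformly in $k$, with $\llbracket D_1\rrbracket \supseteq \llbracket D_2\rrbracket \supseteq \cdots$, $\lambda(\llbracket D_k\rrbracket) \leq 2^{-k}$, and $X \in \bigcap_k \llbracket D_k\rrbracket$. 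Now define $l$ by recursion on $|w|$: with $c(w) = \#\{m < |w| \mid l(w\upharpoonright m) = \yes\}$, set $l(w) = \yes$ iff there is an $n$ with $c(w) < n \leq |w|$ and $w \in D_n$. Since $w \in D_n$ forces $C_w \subseteq \llbracket D_n\rrbracket$ and hence $|w| \geq n$, the bound $n \leq |w|$ discards no relevant $n$, so $l$ is a computable learning function. Verifying condition~2 is then immediate: if $\nu_l(Y) \geq n$, the $n$-th $\yes$ along $Y$ sits at a prefix $w$ with $c(w) = n-1$, so $w \in D_{n'}$ for some $n' \geq n$ and $Y \in \llbracket D_{n'}\rrbracket \subseteq \llbracket D_n\rrbracket$; thus $\{Y \mid \nu_l(Y) \geq n\} \subseteq \llbracket D_n\rrbracket$ has measure at most $2^{-n}$. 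For condition~1, suppose toward a contradiction that $\nu_l(X) = N < \infty$ with the last $\yes$ at position $p$, and pick $n > \max(N,p)$; since $X \in \llbracket D_n\rrbracket$, some prefix $X\upharpoonright q$ lies in $D_n$, and because every string in $D_n$ has length $\geq n$ we get $q \geq n > p$, so $c(X\upharpoonright q) = N$, and as $N < n \leq q$ with $X\upharpoonright q \in D_n$ this forces $l(X\upharpoonright q) = \yes$ at position $q > p$ --- a contradiction. Hence $\nu_l(X) = \infty$, completing this direction.

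The only delicate points are in the converse. One is the normal-form step: one must check that a Martin-L\"of test for $X$ can be taken simultaneously decidably generated (not merely c.e.) and nested, without enlarging any level past $2^{-k}$ or losing $X$ from it. The other --- the real heart of the matter --- is the verification of condition~1 for the self-referential $l$; what makes it go through is the elementary observation that every generator of the $k$-th test level has length at least $k$, so that $X$'s witnessing prefixes are forced arbitrarily far out and the running count $c(X\upharpoonright m)$ can never stabilize. (Schnorr's theorem, Theorem~\ref{schnorrthm}, together with Ville's theorem, Theorem~\ref{villeorig}, and the martingale maximal inequality give an alternative way to produce the initial test via a lower-semicomputable martingale succeeding on $X$; this is not needed if one argues directly from the definition of algorithmic measure zero.)
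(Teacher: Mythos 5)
The paper states this theorem as a cited result of Zaffora Blando \cite{blando2021learning} and does not reproduce its proof, so there is no in-paper proof to compare against; your argument stands on its own, and it is correct. The forward direction is routine once you observe $G_n = \llbracket B_n\rrbracket$, with the conversion to the total function $g$ of the algorithmic-measure-zero definition a standard bookkeeping step that you flag appropriately. The converse is where the real work is, and both of its delicate points are handled properly: the normalization to a nested, decidably-generated test $\llbracket D_1\rrbracket \supseteq \llbracket D_2\rrbracket \supseteq \cdots$ with $\lambda(\llbracket D_n\rrbracket)\leq 2^{-n}$ and $X\in\bigcap_k\llbracket D_k\rrbracket$, and the self-referential recursive definition of $l$. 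The one observation that saves both the computability of $l$ (justifying the bound $n\leq|w|$ on the search) and the verification of condition~1 is that $\lambda(\llbracket D_n\rrbracket)\leq 2^{-n}$ forces every $w\in D_n$ to have $|w|\geq n$, so $X$'s witnessing prefixes in $D_n$ escape any proposed final $\yes$-position as $n$ grows, and the running count $c(X\upharpoonright m)$ can never stabilize; you identify this correctly as the heart of the matter. The verification of condition~2 via $\{Y\mid\nu_l(Y)\geq n\}\subseteq\llbracket D_n\rrbracket$, using nesting to pass from $D_{n'}$ to $D_n$, is also correct. One line worth adding if you write this up formally: the decidability of $E_k$ under the padding trick relies on the fact that any string $v$ can only arise from generators $g(k,i)$ with $i\leq|v|$ (since the $i$-th generator either keeps its length $\geq i$ or is padded to length exactly $i$), so membership of $v$ in $E_k$ is settled by examining $g(k,0),\dots,g(k,|v|)$; making that explicit turns the ``routine padding trick'' into a genuinely one-line check.
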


\begin{theorem} \label{LSCLFlem}
The following are equivalent:\\
\tab 1. There exists a computable learning function  that uniformly weakly detects that $X$ is patterned.\\
\tab 2. There exists a polynomial-time computable learning function that uniformly weakly detects that $X$ is patterned.
\end{theorem}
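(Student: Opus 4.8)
The direction $2 \Rightarrow 1$ is immediate, since every polynomial-time computable function is computable. For $1 \Rightarrow 2$, fix a computable learning function $l$ that uniformly weakly detects that $X$ is patterned; the plan is to produce a polynomial-time computable learning function $l'$ that does the same, via a standard slowdown (``delay'') construction. Informally, $l'$ runs $l$ on prefixes of its input under a time budget proportional to the input length, and it answers $\yes$ exactly when the frontier of finished computations advances past a position at which $l$ answered $\yes$.

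Concretely, fix a machine computing $l$ and, for $w \in \zost$, let $\mathrm{reach}(w)$ be the largest $k \le |w|$ such that the computations of $l(w \upharpoonright 0), l(w \upharpoonright 1), \dots, l(w \upharpoonright k)$, run one after another, together use at most $|w|$ steps (with $\mathrm{reach}(w) = -1$ if even $l(\lambda)$ does not halt within $|w|$ steps; in particular $\mathrm{reach}(\lambda) = -1$). Two facts are immediate: if $w \sqsubseteq w'$ then $\mathrm{reach}(w) \le \mathrm{reach}(w')$, because $w'$ offers a larger budget and agrees with $w$ on the relevant prefixes; and, since $l$ is total, for each $j$ the finitely many computations $l(X \upharpoonright 0), \dots, l(X \upharpoonright j)$ halt in finite total time, so $\mathrm{reach}(X \upharpoonright n) \to \infty$ as $n \to \infty$. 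Now define $l'(w) = \yes$ iff $|w| \ge 1$ and there is some $j$ with $\mathrm{reach}(w \upharpoonright (|w|-1)) < j \le \mathrm{reach}(w)$ and $l(w \upharpoonright j) = \yes$. Computing $\mathrm{reach}(w)$ and $\mathrm{reach}(w \upharpoonright (|w|-1))$ takes $O(|w|)$ simulated steps each (with at most polylogarithmic simulation overhead), and the values $l(w \upharpoonright j)$ for $j \le \mathrm{reach}(w)$ are produced as a byproduct of computing $\mathrm{reach}(w)$, so $l'$ is polynomial-time (in fact quasi-linear-time) computable.

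It remains to verify that $l'$ uniformly weakly detects that $X$ is patterned. For condition 1, each index $j$ with $l(X \upharpoonright j) = \yes$ ``fires'' $l'$ at the unique length $n_j \ge 1$ for which $\mathrm{reach}(X \upharpoonright (n_j - 1)) < j \le \mathrm{reach}(X \upharpoonright n_j)$; this $n_j$ exists by monotonicity of $\mathrm{reach}$ together with $\mathrm{reach}(\lambda) = -1$ and $\mathrm{reach}(X \upharpoonright n) \to \infty$, and since $\mathrm{reach}(w) \le |w|$ it satisfies $n_j \ge j$. Hence infinitely many $\yes$ positions of $l$ along $X$ yield an unbounded (hence infinite) set of firing lengths, so $l'(X \upharpoonright n) = \yes$ for infinitely many $n$. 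For condition 2, fix any $Y$; the intervals $(\mathrm{reach}(Y \upharpoonright (m-1)), \mathrm{reach}(Y \upharpoonright m)]$ over $m \ge 1$ are pairwise disjoint, so mapping each $\yes$ position $m$ of $l'$ along $Y$ to the least $\yes$ position $j$ of $l$ lying in the corresponding interval is a well-defined injection into the set of $\yes$ positions of $l$ along $Y$. Thus $\#\{m : l'(Y \upharpoonright m) = \yes\} \le \#\{j : l(Y \upharpoonright j) = \yes\}$ for every $Y$, whence $\{Y : \#\{m : l'(Y \upharpoonright m) = \yes\} \ge n\} \subseteq \{Y : \#\{j : l(Y \upharpoonright j) = \yes\} \ge n\}$, and the latter has $\lambda$-measure at most $2^{-n}$ by the corresponding property of $l$. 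So $l'$ satisfies both clauses of uniform weak detection.

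The delicate point is engineering the firing rule so that both clauses of uniform weak detection survive simultaneously. Simply echoing the last completed value, $l'(w) = l(w \upharpoonright \mathrm{reach}(w))$, would repeat a single $\yes$ of $l$ over many consecutive input lengths and wreck the exponential measure bound of clause 2; conversely, firing only on a strict one-step advance of the frontier can skip over $\yes$ positions of $l$ — possibly all of them — and wreck clause 1. The interval-based rule above is exactly what threads this needle, and the rest of the argument is bookkeeping. The only other thing to be careful about is the underlying machine model, but since ``polynomial-time computable'' is robust under the usual simulation overheads, any reasonable choice suffices.
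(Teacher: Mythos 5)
Your proof is correct and uses essentially the same idea as the paper's: a ``slowdown'' construction that delays each $\yes$ of $l$ by an amount tied to running time, so that the delayed learner is polynomial-time computable while both clauses of uniform weak detection are preserved. The paper delays each prefix $w'$ independently to position $|w'| + t_{w'}$ (where $t_{w'}$ is the running time of $M_l$ on $w'$) and runs the simulation afresh for each $w' \sqsubset w$, giving a quadratic-time $\hat{l}$; you instead use a cumulative frontier $\mathrm{reach}(w)$ under a shared budget of $|w|$ steps, which is quasi-linear, and you fire on the interval of newly-completed prefixes. Your injection argument for clause 2 is a bit cleaner than the paper's informal ``same count, all delayed'' claim, which silently tolerates collisions (two original $\yes$'s delayed to the same length)---both your version and the paper's are fine because collisions only decrease the count, but your phrasing makes that explicit. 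These are presentational differences within the same approach.
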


\begin{proof}
(2) $\implies$ (1) is immediate. 



To see that (1) $\implies$ (2):

Let $l$ be a computable learning function which uniformly weakly detects that $X$ is patterned. Let $M_l(w)$ be a TM which computes $l(w)$. Algorithm \ref{alg:hat-l} specifies a learning function with the following properties:

$\hat{l} = \yes$ if there exist $w' \sqsubset w$ and $t_{w'} \in \N$ such that all of the following hold:

\begin{center}
\begin{enumerate}
    \item $|w| = |w'| + t_{w'}$
    \item $l(w') = \yes$
    \item $t_{w'} = \min\{ t \text{ | } M_l(w') = \yes \text{ after } t \text{ steps}\}$,
\end{enumerate}
\end{center}

and $\hat{l} = \no$ otherwise.

If $l$ says $\yes$ infinitely often on $X$, then there are infinitely many $w', t_{w'}$ where $M_l(w') = \yes$ after exactly $t_{w'}$ time steps. Thus there are infinitely many $w$ with $|w| = |w'|+t_{w'}$ where $\hat{l}$ says $\yes$. Thus, $\hat{l}$ says $\yes$ infinitely often on $X$.

For all $Y \in \zinf$, the number of $\yes$ given by $\hat{l}$ along $Y$ remains the same as the number given by $l$, all such answers being `delayed' until later. Thus, the measure condition is satisfied. Thus, $\hat{l}$ is a learning function which uniformly weakly detects that $X$ is patterned. 


\begin{algorithm}
\caption{ $M_{\hat{l}}$}
\label{alg:hat-l}
\begin{algorithmic}[1]
\STATE Input $w$:
\FORALL{$w' \sqsubset w$}
    \STATE Run $M_l(w')$ for exactly $|w| - |w'|$ steps.
    \IF{$M_l(w')$ prints $\yes$ for the first time after exactly $|w| - |w'|$ steps}
        \RETURN $\yes$
    \ELSE
        \STATE Continue
    \ENDIF
\ENDFOR
\RETURN $\no$
\end{algorithmic}
\end{algorithm}

\noindent$M_{\hat{l}}$ always halts and terminates in time polynomial in $|w|$. Thus, $\hat{l}$ is polynomial-time computable. 
\end{proof}

As a result, one can characterize algorithmic randomness in terms of polynomial-time computable learning functions:

\begin{corollary}
    A sequence $X \in \zinf$ is Martin-L\"of random if and only if there is no polynomial-time computable learning function that uniformly weakly detects that $X$ is patterned.
\end{corollary}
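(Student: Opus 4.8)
The plan is simply to chain the two preceding characterizations. By Theorem~\ref{zbmain}, $X$ is Martin-L\"of random if and only if no computable learning function uniformly weakly detects that $X$ is patterned. By the equivalence of conditions (1) and (2) in Theorem~\ref{LSCLFlem}, a computable learning function with this detection property exists if and only if a polynomial-time computable learning function with this property exists. Composing these two biconditionals gives exactly the statement of the corollary: $X$ is Martin-L\"of random if and only if no polynomial-time computable learning function uniformly weakly detects that $X$ is patterned.

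It is worth spelling out which half of Theorem~\ref{LSCLFlem} does which job. The ``easy'' direction of the corollary — if no computable learning function detects $X$, then certainly no polynomial-time computable one does — is just the observation that every polynomial-time computable learning function is in particular a computable learning function, i.e., the $(2)\Rightarrow(1)$ half of Theorem~\ref{LSCLFlem}. The substantive direction — that Martin-L\"of randomness of $X$ forbids even a polynomial-time detector — is obtained by contraposition from $(1)\Rightarrow(2)$: a polynomial-time detector is a computable detector, which by Theorem~\ref{zbmain} would witness that $X$ is not Martin-L\"of random.

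There is no real obstacle at this point, since all of the content has already been absorbed into Theorem~\ref{LSCLFlem}, whose construction (Algorithm~\ref{alg:hat-l}) takes a computable learning function $l$ and delays each $\yes$ answer on a string $w'$ by exactly the number of steps $M_l$ needs to first output $\yes$ on $w'$; this produces a learning function computable in time polynomial in the input length, while preserving both the ``$\yes$ infinitely often along $X$'' requirement and, for every $Y \in \zinf$, the exact count of $\yes$ answers along $Y$, hence the measure bound $2^{-n}$ in Definition~\ref{uwd}. The only thing to verify is that the two biconditionals are quantified the same way (over learning functions with the uniform weak detection property for the \emph{same} sequence $X$), which they manifestly are.
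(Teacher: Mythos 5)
Your proposal is correct and matches the paper's (implicit) argument: the corollary is stated immediately after Theorem~\ref{LSCLFlem} with the phrase ``As a result,'' and the intended proof is exactly the chaining of Theorem~\ref{zbmain} with the $(1)\Leftrightarrow(2)$ equivalence of Theorem~\ref{LSCLFlem}, as you spell out.
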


We also note a useful fact about uniform weak detectability: 

\begin{observation}\label{obsunion}
If $l_1$ uniformly weakly detects that $\Gamma_1$ is patterned, and $l_2$ uniformly weakly detects that $\Gamma_2$ is patterned, then there exists a learning function $l_3$ which uniformly weakly detects that $\Gamma_1 \cup \Gamma_2$ is patterned. This transformation preserves computability.
\end{observation}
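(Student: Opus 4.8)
The plan is to construct $l_3$ so that, along every $Y \in \zinf$, the number of $\yes$-answers that $l_3$ gives equals $\max\{0,\, M(Y) - 1\}$, where $M(Y)$ denotes the larger of the two $\yes$-counts that $l_1$ and $l_2$ give along $Y$. This is the learning-function incarnation of the familiar ``reindex by one'' trick for merging Martin-L\"of tests: a union of two families each of measure at most $2^{-n}$ becomes a single family of measure at most $2^{-n}$ again after dropping the first level. Note that condition~2 of Definition~\ref{uwd} is a property of the learning function alone, while condition~1 is the statement, for a fixed $X$, that the function answers $\yes$ infinitely often along $X$; so ``$l_i$ uniformly weakly detects that $\Gamma_i$ is patterned'' means condition~2 holds for $l_i$ and condition~1 holds for every $X \in \Gamma_i$.

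For a finite string $w$, put $N_i(w) = \#\{ m \le |w| : l_i(w\upharpoonright m) = \yes \}$ and $M(w) = \max\{N_1(w), N_2(w)\}$. Each $N_i$ is nondecreasing along prefixes and increases by at most $1$ at each step, hence so is $M$. Define $l_3(w) = \yes$ precisely when $|w| \ge 1$, $M(w) \ge 2$, and $M(w) > M(w\upharpoonright(|w|-1))$ --- that is, $l_3$ answers $\yes$ exactly at the steps where the running maximum $M$ jumps to a new value that is at least $2$. A routine induction on $|w|$ (the empty string contributing $0$ because $M(\lambda) \le 1$) shows $\#\{ m \le |w| : l_3(w\upharpoonright m) = \yes \} = \max\{0,\, M(w) - 1\}$; passing to the limit gives $\#\{ m \in \N : l_3(Y\upharpoonright m) = \yes \} = \max\{0,\, M(Y) - 1\}$ for every $Y$, where $M(Y) \in \N \cup \{\infty\}$ is the supremum of the $M(Y\upharpoonright m)$.

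Two verifications then finish the proof. For pattern detection, if $X \in \Gamma_1$ then $l_1$ answers $\yes$ infinitely often along $X$, so $M(X) = \infty$ and hence $l_3$ answers $\yes$ infinitely often along $X$; the case $X\in\Gamma_2$ is symmetric, giving condition~1 for every $X \in \Gamma_1 \cup \Gamma_2$. For the measure bound, fix $n \ge 1$; then $\max\{0, M(Y)-1\} \ge n$ iff $M(Y) \ge n+1$ iff $N_1(Y) \ge n+1$ or $N_2(Y) \ge n+1$, so a union bound together with condition~2 for $l_1$ and $l_2$ gives
\[
\lambda\bigl( \{ Y : \#\{ m : l_3(Y\upharpoonright m) = \yes \} \ge n \} \bigr) \le 2^{-(n+1)} + 2^{-(n+1)} = 2^{-n},
\]
and the case $n = 0$ is trivial; this is condition~2 for $l_3$. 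Finally $w \mapsto M(w)$ is computable from $l_1$ and $l_2$ (evaluate each $l_i$ on the finitely many prefixes of $w$ and count), so $l_3$ is computable whenever $l_1$ and $l_2$ are, which gives the ``preserves computability'' clause.

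The only real care needed is in the bookkeeping of the second paragraph: one must check that answering $\yes$ exactly at the new highs of $M$ that reach level $\ge 2$ produces exactly $\max\{0, M(w)-1\}$ $\yes$-answers --- in particular handling the empty string, where $M(\lambda)$ may already equal $1$ if $l_1(\lambda)$ or $l_2(\lambda)$ is $\yes$ --- and that the threshold ``$\ge 2$'' (equivalently, skipping the very first new high) is exactly what absorbs the factor of $2$ coming from the union bound. Nothing here is deep, but getting the off-by-one right is the whole point of the construction.
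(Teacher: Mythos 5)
Your proof is correct, and it uses a genuinely different construction than the paper. The paper's $l_3$ answers $\yes$ whenever $l_1$ or $l_2$ does, skipping ``the first time for each,'' so the resulting $\yes$-count along $Y$ is governed by the \emph{sum} $N_1(Y)+N_2(Y)$. Your $l_3$ instead tracks the running maximum $M(w)=\max(N_1(w),N_2(w))$ and fires exactly at new highs of $M$ that reach level $\geq 2$, making the $\yes$-count \emph{exactly} $\max\{0,M(Y)-1\}$. This is not a cosmetic difference: it is precisely what makes the measure bookkeeping close. With your count, $\{\#\yes_{l_3}\geq n\}=\{N_1\geq n+1\}\cup\{N_2\geq n+1\}$ holds on the nose, and the union bound gives $2^{-(n+1)}+2^{-(n+1)}=2^{-n}$; with a union-of-positions count one only knows $N_1(Y)+N_2(Y)\geq n+2$, which the union bound controls at roughly $2^{-n/2}$, not the required $2^{-n}$. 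Your formulation also sidesteps a delicacy in the paper's ``unless'' clause, which as literally written would suppress every $\yes$ from $l_3$ along any $X$ on which one of $l_1,l_2$ never fires, threatening condition~1 of Definition~\ref{uwd}; the $M$-jump criterion is immune to this since $M(X)=\infty$ whenever either $N_i(X)=\infty$. In short, ``max of $\yes$-counts minus one'' rather than ``union of $\yes$-positions minus first-for-each'' is the version of the Martin-L\"of test-merging trick that actually delivers the tight $2^{-n}$ bound, and your careful induction on the jumps of $M$ is the right way to realize it with a single $\{\yes,\no\}$-valued function.
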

\begin{proof}
Define $l_3$ by:

$l_3(v) = \yes$ if either $l_1(v) = \yes$ or $l_2(v) = \yes$, unless for all $v' \sqsubset v$ $l_1(v) = \no$ or for all $v' \sqsubset v$ $l_2(v) = \no$. 

That is, $l_3$ says $\yes$ whenever either $l_1$ or $l_2$ would say $\yes$, except for the first time for each.

It is easy to verify that $l_3$ says $\yes$ infinitely often on every $X \in \Gamma_1 \cup \Gamma_2$ and that the measure property is satisfied. It is also easy to show that if $l_1$ and $l_2$ are computable, then $l_3$ is as well.
\end{proof}

As a result, uniform weak detectability by computable learning functions is closed under finite unions.

\section{Classical and algorithmic dimensions}
Next, we review the definitions of classical and algorithmic dimensions.

We say a set $A$ \emph{covers} a set of sequences $\Gamma$ if for every $X \in \Gamma$ there is some $w \in A$, $w \sqsubseteq X$. Let $k \in \N$, and let $\mathcal{A}_k = \{A \mid A \text{ is a prefix set and } \forall x \in A, |x| \geq k\}.$ Let $$\mathcal{A}_k(\Gamma) = \{ A \in \mathcal{A}_k \mid A \text{ covers } \Gamma\},$$ and let $$H^s_k(\Gamma) = \inf_{A \in \mathcal{A}_k(\Gamma)} \sum_{w \in A} 2^{-s|w|}.$$ Note that this infimum is taken only over sets of cylinders, not over all possible covers. For that reason, the following function is a proxy for - and is within a constant multiplicative factor of - what is know as the \emph{$s$-dimensional Hausdorff outer measure}, in which all covers are considered.

\begin{mydef} \cite{haus19}
$H^s(\Gamma) = \lim_{k \ra \infty} H^s_k(\Gamma).$
\end{mydef}

For any set $\Gamma$, there exists some $s \in [0,\infty)$ such that for every $a < s < b$, \\
\begin{enumerate}
    \item $H^a(\Gamma) = \infty$, and
    \item $H^b(\Gamma) = 0$.
\end{enumerate}

The real number $s$ is the Hausdorff dimension of $\Gamma$:

\begin{mydef} \cite{haus19}
The \emph{Hausdorff dimension} of a set $\Gamma \subseteq \zinf$ is 
$$\dim_H(\Gamma) = \inf \{s \in [0,\infty)\text{ }|\text{ } H^s(\Gamma) = 0\}.$$
\end{mydef}

Hausdorff dimension can be characterized in terms of gales:

\begin{theorem}
\cite{lutz03a} 
$$\dim_H(\Gamma) = \inf \{s \in [0,\infty) | \text{ there exists an $s$-gale that succeeds on $\Gamma$}\}.$$
\end{theorem}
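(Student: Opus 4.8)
The plan is to prove the two inequalities separately, writing $G(\Gamma)$ for the right-hand side $\inf\{s \ge 0 : \text{some } s\text{-gale succeeds on } \Gamma\}$. For $\dim_H(\Gamma) \le G(\Gamma)$, I would fix an $s$-gale $d$ with $d(\lambda) = 1$ that succeeds on $\Gamma$ and show $H^s(\Gamma) = 0$, which yields $\dim_H(\Gamma) \le s$ and hence the inequality after taking the infimum over $s$. The key point is that the renormalized function $\tilde d(w) = d(w)2^{-s|w|}$ satisfies $\tilde d(w) = \tilde d(w0) + \tilde d(w1)$, so $\sum_{w \in A}\tilde d(w) \le \tilde d(\lambda) = 1$ for every prefix set $A$. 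Given $m, k \in \N$, let $A_{k,m}$ consist of those $w$ with $|w| \ge m$ at which $d$ first becomes $\ge 2^k$ among the prefixes of $w$ of length $\ge m$; since $\limsup_n d(X\upharpoonright n) = \infty$ for $X \in \Gamma$, the set $A_{k,m}$ is a prefix set that covers $\Gamma$, and since $2^{-s|w|} \le 2^{-k}\tilde d(w)$ on it, $\sum_{w \in A_{k,m}} 2^{-s|w|} \le 2^{-k}$. Thus $H^s_m(\Gamma) \le 2^{-k}$ for every $k$, hence $H^s_m(\Gamma) = 0$ for every $m$ and $H^s(\Gamma) = 0$.

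For the reverse inequality it suffices to produce, for every $s > \dim_H(\Gamma)$, an $s$-gale that succeeds on $\Gamma$. Fix such an $s$, choose $s'$ with $\dim_H(\Gamma) < s' < s$ so that $H^{s'}(\Gamma) = 0$, and for each $m$ pick a prefix cover $A_m \in \mathcal{A}_m(\Gamma)$ with $\sum_{w \in A_m} 2^{-s'|w|} \le 2^{-m}$. The heart of the construction is a ``basic'' $s$-gale attached to a prefix set $A$: let $\nu_w$ be the subprobability measure of total mass $2^{-s'|w|}$ spread uniformly over the cylinder $C_w$, put $\tilde d_A = \sum_{w \in A}\nu_w$ (which again obeys the additive law), and set $d_A(v) = 2^{s|v|}\tilde d_A(v)$. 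Then $d_A$ is an $s$-gale with $d_A(\lambda) = \sum_{w \in A} 2^{-s'|w|}$, and for any $X$ with a prefix $w \in A$ one has $d_A(X\upharpoonright|w|) \ge 2^{s|w|}2^{-s'|w|} = 2^{(s-s')|w|}$ --- this is precisely where the slack $s > s'$ converts a merely bounded gale into one that grows large along the cover. Taking $d = \sum_m d_{A_m}$ then gives an $s$-gale (the defining equation is linear and the partial sums are bounded at every string), with $d(\lambda) \le \sum_m 2^{-m} < \infty$, and for $X \in \Gamma$ the prefix $w_m \in A_m$, of length $\ge m$, yields $d(X\upharpoonright|w_m|) \ge 2^{(s-s')m} \to \infty$; so $d$ succeeds on $\Gamma$, and dividing by $d(\lambda)$ normalizes the initial capital.

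I expect the second direction to be the main obstacle, and within it the design of the basic gale together with the bookkeeping that keeps the combined gale finite at every string (not merely at $\lambda$) while still diverging along $\Gamma$. Two ingredients make it go through: the gap between $s'$ and $s$, which forces the basic gales to overshoot $1$ on their covers by a factor $2^{(s-s')|w|}$, and the length bound $|w| \ge m$ built into the definition of $H^s$, which lets the overshoots of the successive $d_{A_m}$ occur at positions tending to infinity. The first direction is the more routine Kraft-inequality/Ville-style estimate.
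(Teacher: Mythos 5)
This theorem appears in the paper only as a citation to Lutz's original gale-characterization paper; there is no in-paper proof to compare against. Your reconstruction is correct and is essentially the standard argument: the forward direction normalizes $d$ to the additive function $\tilde d(w)=d(w)2^{-s|w|}$ and applies the Kraft/Ville-Kolmogorov bound to the prefix set of first hitting times of the level $2^k$ at depth $\ge m$; the reverse direction builds a basic $s$-gale from each prefix cover $A_m$ witnessing $H^{s'}_m(\Gamma)\le 2^{-m}$ and sums, with the slack $s>s'$ forcing divergence along $\Gamma$. One small point worth making explicit: since $H^{s'}_k(\Gamma)$ is nondecreasing in $k$ (larger $k$ restricts the admissible covers) and its limit is $0$, you in fact have $H^{s'}_k(\Gamma)=0$ for \emph{every} $k$, which is exactly what licenses the choice of $A_m\in\mathcal{A}_m(\Gamma)$ with $\sum_{w\in A_m}2^{-s'|w|}\le 2^{-m}$ for each $m$. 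Your bookkeeping that $\tilde d_{A_m}(v)\le\tilde d_{A_m}(\lambda)$ pointwise (so the infinite sum $d=\sum_m d_{A_m}$ is finite at every string, not just at $\lambda$) is correct and is the part most often glossed over.
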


Lutz also showed that by effectivizing gales\footnote{In Lutz' original proof, \emph{supergales} are used. These satisfy the gale definition with $\leq$ in place of equality. \cite{hitchcock2003gales} shows that gales suffice for the characterization of algorithmic dimension.} at various levels, one can obtain various \emph{effective} dimension notions, including the \emph{algorithmic dimension}:

\begin{mydef}
\cite{jLutz03} The \emph{algorithmic dimension} of a set $\Gamma \subseteq \zinf$ is 
$$\adim(\Gamma) = \inf \{s \in [0,\infty) | \text{ there exists}$$ $$\text{a lower semicomputable $s$-gale that succeeds on $\Gamma$}\}.$$
\end{mydef}

We also note the following theorem:

\begin{theorem}\label{galeswap} \cite{jLutz03}
Let $\mu$ be a probability measure on $\zinf$, let $s, s' \in [0,\infty)$, and let $d,d':\zost \ra [0,\infty)$. Assume that $$d(w)\mu(w)^s = d'(w)\mu(w)^{s'}$$ for all $w \in \zost$. Then, $d$ is a $\mu$-$s$-gale if and only if $d'$ is a $\mu$-$s'$-gale.
\end{theorem}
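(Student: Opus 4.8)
The plan is to observe that, once the gale equation is written out in full, the conditions defining ``$d$ is a $\mu$-$s$-gale'' and ``$d'$ is a $\mu$-$s'$-gale'' become literally the same equation, so the equivalence is immediate by substitution. Note also that the hypothesis $d(w)\mu(w)^s = d'(w)\mu(w)^{s'}$ is symmetric under interchanging the pair $(d,s)$ with $(d',s')$, so it suffices to prove one implication.

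Concretely, I would first introduce the auxiliary function $F:\zost\to[0,\infty)$ defined by $F(w)=d(w)\mu(w)^s$; by hypothesis we also have $F(w)=d'(w)\mu(w)^{s'}$, and this holds for \emph{every} string, in particular for $w$, $w0$, and $w1$ simultaneously. Next I would unwind Definition~\ref{galedef}: $d$ is a $\mu$-$s$-gale exactly when $d(w)\mu(w)^s=d(w0)\mu(w0)^s+d(w1)\mu(w1)^s$ for all $w\in\zost$, i.e.\ exactly when $F(w)=F(w0)+F(w1)$ for all $w$. Likewise $d'$ is a $\mu$-$s'$-gale exactly when $d'(w)\mu(w)^{s'}=d'(w0)\mu(w0)^{s'}+d'(w1)\mu(w1)^{s'}$ for all $w$, i.e.\ again exactly when $F(w)=F(w0)+F(w1)$ for all $w$. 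Since these two characterizations are the identical statement about $F$, each of ``$d$ is a $\mu$-$s$-gale'' and ``$d'$ is a $\mu$-$s'$-gale'' implies the other.

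There is essentially no obstacle here; the only point deserving a moment's care is the treatment of strings $w$ with $\mu(w)=0$, which cannot occur for the Lebesgue measure $\lambda$ (where $\mu(w)=2^{-|w|}>0$) but can for a general probability measure $\mu$. For such $w$ and $s,s'>0$ the equation $F(w)=F(w0)+F(w1)$ reads $0=0$ on both sides and is satisfied vacuously for $d$ and for $d'$ alike, so the argument goes through unchanged; and for $s=0$ or $s'=0$ one should simply fix the convention $\mu(w)^0=1$ for all $w$ (as is implicit in Definition~\ref{galedef}), after which the same bookkeeping applies. Apart from recording this convention, the proof is the one-line substitution above.
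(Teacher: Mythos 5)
Your proof is correct, and it is essentially the standard argument: defining $F(w)=d(w)\mu(w)^s=d'(w)\mu(w)^{s'}$ and observing that both gale conditions reduce to $F(w)=F(w0)+F(w1)$ is exactly the right way to see this. The paper itself does not prove Theorem~\ref{galeswap}; it cites \cite{jLutz03}, and the argument there is the same one-line substitution you give. Your extra care about the degenerate case $\mu(w)=0$ and the convention $\mu(w)^0=1$ is a reasonable and harmless clarification.
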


A corollary of this theorem, which we will make use of in the main proof of the next section, is the following:

\begin{proposition} 
\label{martingalevsorder}
If $d$ is an $s$-gale that succeeds on $X$ then there exists a martingale $d'$ that succeeds on $X$ against order $h(w)=2^{(1-s)|w|}$. That is, $$\limsup_{n \ra \infty} \frac{d(X \upharpoonright n)}{h(n)} = \infty.$$
\end{proposition}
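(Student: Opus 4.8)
The plan is to apply Theorem~\ref{galeswap} directly with the roles of $(s,d)$ and $(s',d')$ set up so that $d'$ is a martingale, i.e.\ a $1$-gale. Working with the Lebesgue measure $\mu = \lambda$, we have $\lambda(w) = 2^{-|w|}$, so the hypothesis of Theorem~\ref{galeswap} reads
\begin{equation*}
d(w)\, 2^{-s|w|} = d'(w)\, 2^{-|w|},
\end{equation*}
which forces the definition $d'(w) = d(w)\, 2^{(1-s)|w|} = d(w)/h(|w|)$ where $h(n) = 2^{(1-s)n}$. First I would verify that this $d'$ is nonnegative (clear, since $d$ is) and then invoke Theorem~\ref{galeswap} with $s' = 1$ to conclude that $d'$ is a martingale, since $d$ is an $s$-gale by assumption.

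Next I would translate the success condition. By definition $d$ succeeds on $X$ means $\limsup_{n\to\infty} d(X\harp n) = \infty$. Substituting $d(X\harp n) = d'(X\harp n)\, h(n)$ (using $|X\harp n| = n$), this is exactly $\limsup_{n\to\infty} d'(X\harp n)\, h(n) = \infty$, i.e.\ $\limsup_{n\to\infty} \frac{d'(X\harp n)}{1/h(n)}$ — but more to the point, rearranging the display in the statement, the claim ``$\limsup_{n\to\infty} \frac{d(X\harp n)}{h(n)} = \infty$'' with $h(n) = 2^{(1-s)n}$ is literally the same statement as $\limsup_{n\to\infty} d'(X\harp n) = \infty$ once one notes that the proposition as stated is (perhaps with a slight abuse in which $h$ is the ``order'' against which $d'$ runs) asserting that the original gale $d$ beats the order $h$; and $d = d' \cdot h$. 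So the success of $d$ against the trivial order and the success of $d'$ against order $h$ are the same assertion, and we are done.

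There is essentially no obstacle here — the whole content is bookkeeping with the exponent identity $\mu(w)^s = \mu(w)^{s'} \cdot \mu(w)^{s - s'}$ specialized to $\mu = \lambda$, $s' = 1$. The one point requiring a line of care is the normalization of initial capital: the paper's convention is $d(\lambda) = 1$, and $d'(\lambda) = d(\lambda)\cdot 2^{(1-s)\cdot 0} = d(\lambda) = 1$, so $d'$ is indeed a martingale with the standard initial capital, and no rescaling is needed. I would therefore present the proof as: set $d'(w) = 2^{(1-s)|w|} d(w)$; observe $d'(w)\lambda(w) = d(w)\lambda(w)^s$ for all $w$, so Theorem~\ref{galeswap} (with $s'=1$) gives that $d'$ is a martingale; and observe $d(X\harp n)/h(n) = d'(X\harp n)$, so $\limsup_n d(X\harp n) = \infty$ transfers verbatim to the displayed limit.
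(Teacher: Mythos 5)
Your approach --- apply Theorem~\ref{galeswap} with $\mu = \lambda$ and $s' = 1$, exactly as the paper intends (it states the result as a corollary of that theorem with no explicit proof) --- is the right one, but there is a reciprocal error in the algebra that propagates through your argument. From $d(w)\,2^{-s|w|} = d'(w)\,2^{-|w|}$ you correctly derive $d'(w) = d(w)\,2^{(1-s)|w|}$, but you then rewrite this as $d(w)/h(|w|)$. Since $h(n) = 2^{(1-s)n}$, the factor $2^{(1-s)|w|}$ \emph{is} $h(|w|)$, so in fact $d'(w) = d(w)\,h(|w|)$, and hence $d'(X \upharpoonright n)/h(n) = d(X \upharpoonright n)$ --- the opposite of your ``observe $d(X \upharpoonright n)/h(n) = d'(X \upharpoonright n)$.'' With the corrected identity the proof is immediate: ``$d'$ succeeds against order $h$'' means $\limsup_n d'(X \upharpoonright n)/h(n) = \infty$, and since $d'(X \upharpoonright n)/h(n) = d(X \upharpoonright n)$, this is precisely the hypothesis that $d$ succeeds on $X$.

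A likely source of the confusion is that the proposition's displayed limit writes $d$ where it should write $d'$ (the new martingale). Taken literally, $\limsup_n d(X \upharpoonright n)/h(n) = \infty$ for the original $s$-gale $d$ is impossible whenever $s < 1/2$: any $s$-gale with $d(\lambda) = 1$ satisfies $d(w) \le 2^{s|w|}$, so $d(X \upharpoonright n)/h(n) \le 2^{(2s-1)n} \to 0$. Your inversion of $h$ appears to have crept in precisely while trying to force the computation to land on that mistyped display; once you read the display with $d'$ in place of $d$, the relation $d' = d \cdot h$ (not $d/h$) is the one that makes everything match.
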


Another important classical dimension notion is the packing dimension \cite{tric1982,sullivan1984entropy}. As with the Hausdorff dimension, it is easier to define it in terms of gales than in terms of its original conception via coverings. First we define a notion of \emph{strong success} for gales:

\begin{mydef}\cite{AHLM07} 
A $\mu$-$s$-gale $d$ \emph{succeeds strongly} on a set $\Gamma$ of sequences if $$\liminf_{n \ra \infty} d(X \upharpoonright n) = \infty$$ for every sequence $X \in \Gamma.$
\end{mydef}

\begin{mydef} \cite{AHLM07} The \emph{packing dimension} of a set $\Gamma \subseteq \zinf$ is
    \[\dim_P(\Gamma) = \inf \{ s \in [0,\infty) | \text{ there exists an $s$-gale that succeeds strongly on } \Gamma\}.\]
\end{mydef}

The packing dimension can be effectivized as follows:

\begin{mydef} \cite{AHLM07}
    The \emph{algorithmic packing dimension} or \emph{algorithmic strong dimension} of a set $\Gamma \subseteq \zinf$ is \[\aDim(\Gamma) = \inf \{s \in [0,\infty) |  \text{ there exists a lower semi-computable $s$-gale}\] \[\text{ that succeeds strongly on } \Gamma\}. \]
\end{mydef}

We use the above notations $\adim$ and $\aDim$ when describing the algorithmic dimensions of sets in order to distinguish these from their classical counterparts. When applied to individual sequences, we often use $\dim$ and $\Dim$, respectively, as there is no ambiguity.

\section{Hausdorff dimension via learning functions}

Learning functions can be used to characterize the classical (Hausdorff) dimension. Once such a characterization is in place, one can impose further restrictions on the computability of learning functions and thereby use the notion of learning to characterize dimension notions at various levels of effectivity.

We refine the definition of uniform weak detectability, by adding a requirement on the frequency of $\yes$ answers, in order to characterize dimension. Recall that we identify $\yes$ with 1 and $\no$ with 0, thus the sum in the following definition is well-defined.


\begin{mydef}
    If $l$ is a learning function,  the \emph{path average of $l$ up to $w$} is denoted \[\AVG_l(w) = \frac{\sum_{i=0}^{i=|w|} l(w \upharpoonright i)}{|w|}.\]
\end{mydef}

Let $\Delta$ be a computability restriction such as `lower semi-computable', `computable', or `all' (the absence of a restriction). 

\begin{mydef}\label{slearn}
A sequence $X$ is ($\Delta$)-\emph{$s$-learnable} if and only if there exists a function $l:\zost \ra \{\yes,\no\}$ such that the following three conditions are satisfied.

\begin{enumerate}
    \item For every $w \in \zost$, the path averages $\AVG_l(w)$ are uniformly $\Delta$-computable (in $w$).
    \item For all $n \in \N$, \[\lambda(\{Y \in \zinf \mid \#\{i \mid l(Y \upharpoonright i) = \yes \} \geq n)\}) \leq 2^{-n}.\]
    \item  \[ \limsup_{n \ra \infty} \AVG_l(X \upharpoonright n) \geq 1-s.\]
\end{enumerate}
\end{mydef}

We say that a sequence $X$ is \emph{strongly $(\Delta)$-s-learnable} if it satisfies (1) and (2) above, and satisfies condition (3) with a $\liminf$ rather than a $\limsup$.





Specifically, we will say that \emph{algorithmic} $s$-learnability corresponds to $\Delta = \Sigma^0_1$, \emph{computable} $s$-learnability corresponds to $\Delta = \Delta^0_1$ and $s$-learnability as such corresponds to no resource restriction (in other words, any learning functions at all can be used). Note that the restriction on computability is not applied to the learning function itself, but to its path averages on the elements of $\zost$.

We often refer to a learning function which satisfies these properties as an \emph{$s$-learner}, and we say it $s$-\emph{learns} a sequence or set of sequences. 

This definition is in the spirit of \cite{steifer2021note}, emphasizing learnability criteria based on frequencies of $\yes$ answers, rather than restrictions on the measure conditions.

A function $l:\zost \ra \{\no,\yes\}$ is associated to a transformation $\hat{l}: \zinf \ra \{\no,\yes\}^\infty$ defined by $\hat{l}(Y) = X$, where $X[n] = l(Y \upharpoonright n)$. In this sense, each sequence $x$ is transformed into a sequence consisting of $\yes$ and $\no$ ``bits''. For any set $S$, and function $f:\zinf \ra \zinf$, $f^{-1}[S]$ is the set of all $X$ with $f(X) \in S$. Define functions $\hat{l}$ on finite strings analogously.

\begin{observation} 
For every learner $l$ and every $Y \in \zinf$, either $l$ $s$-learns every $X \in \hat{l}^{-1}[Y]$ or none of them.
\end{observation}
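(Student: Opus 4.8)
The plan is to unpack the definitions and observe that $s$-learnability depends on $X$ only through the sequence of $\yes$/$\no$ answers the learner produces, i.e. through $\hat{l}(X)$. Fix a learner $l$ satisfying conditions (1) and (2) of Definition~\ref{slearn} (these conditions do not reference any particular sequence $X$, so they hold or fail independently of which $X$ we consider). What remains is condition (3), the requirement that $\limsup_{n \to \infty} \AVG_l(X \upharpoonright n) \geq 1-s$. The key point is that $\AVG_l(X \upharpoonright n)$ is, by definition, $\frac{1}{n}\sum_{i=0}^{n} l(X \upharpoonright i)$, and each term $l(X \upharpoonright i)$ is exactly the $i$-th bit of $\hat{l}(X)$. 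Hence $\AVG_l(X \upharpoonright n)$ is a function of $\hat{l}(X) \upharpoonright (n+1)$ alone.

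First I would make this precise: for any $Y \in \zinf$, if $X \in \hat{l}^{-1}[Y]$ then $\hat{l}(X) = Y$, so for every $n$ we have $\sum_{i=0}^{n} l(X \upharpoonright i) = \sum_{i=0}^{n} Y[i]$, and therefore $\AVG_l(X \upharpoonright n) = \frac{1}{n}\sum_{i=0}^{n} Y[i]$. The right-hand side does not depend on which preimage $X$ we picked. Consequently, $\limsup_{n \to \infty} \AVG_l(X \upharpoonright n) = \limsup_{n \to \infty} \frac{1}{n}\sum_{i=0}^{n} Y[i]$, a quantity determined entirely by $Y$. So condition (3) either holds for all $X \in \hat{l}^{-1}[Y]$ simultaneously or for none of them, depending solely on whether $\limsup_{n \to \infty} \frac{1}{n}\sum_{i=0}^{n} Y[i] \geq 1-s$. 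Combining this with the $X$-independence of conditions (1) and (2) gives the claim.

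There is essentially no hard part here; the statement is almost definitional once one notices that $\AVG_l(X \upharpoonright n)$ factors through $\hat{l}(X)$. The only point requiring a modicum of care is the bookkeeping on indices — the path average sums $l(w \upharpoonright i)$ for $i$ from $0$ to $|w|$, which is $|w|+1$ terms, while $\hat{l}(X) \upharpoonright n$ by the stated convention records $l(Y \upharpoonright 0), \ldots, l(Y \upharpoonright (n-1))$; one should just confirm that the off-by-one does not affect the $\limsup$ (it does not, since dividing by $n$ versus $n+1$ and including one extra bounded term changes nothing in the limit). I would state the argument in two or three sentences: reduce (3) to a property of $Y = \hat{l}(X)$, note (1) and (2) are independent of $X$, and conclude.
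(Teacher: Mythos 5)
Your proposal is correct and is essentially the same argument the paper gives, just spelled out: the paper's proof is the single remark that ``$s$-learning a sequence is purely a matter of the asymptotic frequency with which some learner says $\yes$ on prefixes of that sequence, assuming the measure condition is satisfied,'' and you have simply made that precise by noting that $\AVG_l(X \upharpoonright n)$ factors through $\hat{l}(X)$ and that conditions (1) and (2) of Definition~\ref{slearn} do not reference $X$. Your attention to the off-by-one in the index bookkeeping is careful and correct, though as you observe it has no bearing on the $\limsup$.
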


This is immediate from the success criterion. $s$-learning a sequence is purely a matter of the asymptotic frequency with which some learner says $\yes$ on prefixes of that sequence, assuming the measure condition is satisfied by $l$.

Closure under finite unions also holds for $s$-learnability:

\begin{observation}
If $l_1$ $s$-learns $\Gamma_1$ and $l_2$ $s$-learns $\Gamma_2$, then there exists a learning function $l_3$ which $s$-learns $\Gamma_1 \cup \Gamma_2$.
\end{observation}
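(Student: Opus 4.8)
The plan is to let $l_3$, at each string $w$, imitate whichever of $l_1,l_2$ has so far recorded (weakly) more $\yes$ answers, with ties resolved in favour of whichever actually says $\yes$ at $w$. Writing $c_i(v)=\#\{u\pref v \mid l_i(u)=\yes\}$ and $w^-$ for $w$ with its last bit removed, I would set $l_3(w)=\yes$ precisely when either $l_1(w)=\yes$ and $c_1(w^-)\ge c_2(w^-)$, or $l_2(w)=\yes$ and $c_2(w^-)\ge c_1(w^-)$. The crux is the identity $c_3(w)=\max\{c_1(w),c_2(w)\}$ for every $w$, which I would prove by induction on $|w|$, splitting into cases according to the values of $l_1(w),l_2(w)$ and the order between $c_1(w^-)$ and $c_2(w^-)$; the symmetric tie-break is exactly what keeps $c_3$ from lagging behind the running maximum (a one-sided rule would let $l_2$, say, repeatedly ``shadow'' $l_1$ and suppress every $\yes$ of $l_3$).

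Granting $c_3=\max\{c_1,c_2\}$, the remaining two conditions of $s$-learnability fall out. Condition (2): writing $N_i(Y)$ for the total number of $\yes$ answers of $l_i$ along $Y$, the identity gives $N_3(Y)=\max\{N_1(Y),N_2(Y)\}$, so $\lambda(\{Y\mid N_3(Y)\ge n\})\le\lambda(\{N_1\ge n\})+\lambda(\{N_2\ge n\})\le 2^{1-n}$. Condition (3): since the path average is $|w|^{-1}$ times the $\yes$-count, the identity transfers to $\AVG_{l_3}(w)=\max\{\AVG_{l_1}(w),\AVG_{l_2}(w)\}$, so for $X\in\Gamma_1$ we get $\AVG_{l_3}(X\harp k)\ge\AVG_{l_1}(X\harp k)$ and hence $\limsup_k\AVG_{l_3}(X\harp k)\ge 1-s$, and symmetrically for $X\in\Gamma_2$. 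This only yields the bound $2^{1-n}$ in place of $2^{-n}$, so I would finish with a one-line amplification: let $l_3'(w)=\yes$ iff $l_3(w)=\yes$ and $c_3(w)\ge 2$, i.e.\ simply discard the first $\yes$ of $l_3$ on each branch. Then $N_{l_3'}(Y)=\max\{N_3(Y)-1,0\}$, so $\lambda(\{N_{l_3'}\ge n\})=\lambda(\{N_3\ge n+1\})\le 2^{-n}$, while $\AVG_{l_3'}(X\harp k)\ge\AVG_{l_3}(X\harp k)-1/k$ keeps the $\limsup$ at least $1-s$; thus $l_3'$ $s$-learns $\Gamma_1\cup\Gamma_2$. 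The construction also respects the computability restrictions, since $\AVG_{l_3'}(w)=\max\{\max\{\AVG_{l_1}(w),\AVG_{l_2}(w)\}-1/|w|,\ 0\}$ and pointwise maxima and subtraction of a computable rational preserve both computability and lower semicomputability.

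The step I expect to be the real obstacle is condition (2). The naive choice ``$l_3(w)=\yes$ iff $l_1(w)=\yes$ or $l_2(w)=\yes$'' trivially preserves the frequency condition, but it makes $N_3$ roughly $N_1+N_2$, and the marginal bounds on $N_1,N_2$ then give only $\lambda(\{N_3\ge n\})\lesssim n\,2^{-n}$ (and genuinely no better in examples), a tail that is not of the form $C\cdot 2^{-n}$ and so cannot be repaired by discarding boundedly many $\yes$ answers. Funnelling the two learners through a single ``$\max$'' counter is what forces $N_3=\max\{N_1,N_2\}$ and preserves the exponential tail, and the delicate point is verifying the counter invariant — in particular that the tie-breaking is symmetric enough that $c_3$ never drops below $\max\{c_1,c_2\}$.
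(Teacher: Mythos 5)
Your proof is correct, and it takes a genuinely different route from the paper's. The paper reuses the construction of the preceding observation: $l_3$ fires whenever $l_1$ or $l_2$ does, except that the first such answer(s) along a branch are suppressed, and the measure condition is then asserted to hold; the accompanying remark (that subtracting $k/n$ leaves the $\limsup$ at $1-s$) shows the intent is to discard a fixed finite number of $\yes$ answers. As you correctly anticipate, that is exactly where the naive union gets into trouble: it gives $N_3 \approx N_1 + N_2$, and the marginal bounds $\lambda(\{N_i \ge n\}) \le 2^{-n}$ on each summand only force a tail of order $n\,2^{-n}$ for the sum (e.g.\ when $l_1$ and $l_2$ examine disjoint coordinates, so that $N_1$ and $N_2$ are independent geometrics), and no bounded-suppression patch converts an $n\,2^{-n}$ tail into a $2^{-n}$ one. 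Your running-max construction avoids this entirely: the symmetric tie-break is needed for the inductive invariant $c_3 = \max\{c_1,c_2\}$, which gives $N_3 = \max\{N_1,N_2\}$ and hence $\lambda(\{N_3 \ge n\}) \le 2^{1-n}$; dropping exactly one $\yes$ per branch then restores $2^{-n}$, while the $1/|w|$ loss in each path average leaves the $\limsup$ intact. So your argument is not merely an alternative — it supplies the measure verification that the paper's sketch leaves out and that the union-with-discard construction would not in fact deliver. Your closing remark on computability is not needed for the observation as stated (it concerns unrestricted $s$-learning, so condition (1) of Definition~\ref{slearn} is vacuous), but it is accurate and is what one would want for the $\Delta$-restricted variants used later.
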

\begin{proof}
The idea is the same as in Observation  \ref{obsunion}. Define $l_3$ by:

$l_3(v) = \yes$ if either $l_1(v) = \yes$ or $l_2(v) = \yes$, unless it's the first time that either $l_1$ or $l_2$ has said $\yes$ on any $v' \sqsubseteq v$.

Note that $\limsup_{n \ra \infty} \frac{\sum^n l(X \upharpoonright i)}{n} \geq (1-s)$  implies that $\limsup_{n \ra \infty} \frac{\sum^n l(X \upharpoonright i)}{n} - \frac{k}{n} \geq (1-s)$ for any fixed $k$. \end{proof}


We will now show that $s$-learning characterizes Hausdorff dimension.

Let $\mathscr{G}(\Gamma) = \{s \in (0,\infty) \mid \text{ there exists a learning function } l \text{ which $s$-learns every}$ $X \in \Gamma\}$. The following theorem is a characterization of Hausdorff dimension in terms of learning functions.

\begin{theorem}\label{hdimlf}
For all $\Gamma \subseteq \zinf$, $$\dim_H(X) = \inf \mathscr{G}(\Gamma).$$
\end{theorem}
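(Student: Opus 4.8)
The plan is to prove the two inequalities $\dim_H(\Gamma) \le \inf \mathscr{G}(\Gamma)$ and $\dim_H(\Gamma) \ge \inf \mathscr{G}(\Gamma)$ by translating between $s$-gales and $s$-learners. The bridge in both directions is the correspondence, already recorded in Proposition~\ref{martingalevsorder}, between an $s$-gale $d$ succeeding on $X$ and a martingale $d'$ with $d' = d \cdot 2^{(1-s)|w|}$; the exponent $(1-s)|w|$ is precisely what the condition $\limsup_n \AVG_l(X\ur n) \ge 1-s$ in Definition~\ref{slearn} is designed to track, since $\sum_{i\le n} l(X\ur i) \approx (1-s)n$ means the learner says $\yes$ on roughly a $(1-s)$-fraction of prefixes.

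For the direction $\inf\mathscr{G}(\Gamma) \ge \dim_H(\Gamma)$: given $s > \dim_H(\Gamma)$, fix $s' $ with $\dim_H(\Gamma) < s' < s$, so by Lutz's gale characterization there is an $s'$-gale $d$ succeeding on every $X \in \Gamma$, which we may take to have $d(\lambda)=1$. I would build a learner $l$ out of $d$ by the standard device of ``cashing in'' doublings of capital: say $\yes$ at the first prefix where the (rescaled) martingale $d' = d\cdot 2^{(1-s')|w|}$ exceeds $2$, reset, say $\yes$ again when it next doubles from that point, and so on --- more precisely, let $l(w) = \yes$ iff $\lfloor \log_2 d'(w)\rfloor > \lfloor \log_2 d'(w') \rfloor$ where $w'$ is the parent of $w$. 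Ville's inequality (Theorem~\ref{villeorig}, or rather its finite form) gives that $\lambda(\{Y : d' \text{ reaches } 2^n \text{ on some prefix}\}) \le 2^{-n}$, which yields measure condition (2); and since $d$ succeeds on $X$ and the rescaling factor at length $n$ is $2^{(1-s')n}$, we get $\limsup_n \log_2 d'(X\ur n) \ge \limsup_n (1-s')n + \log_2 d(X\ur n) $, so the number of $\yes$'s up to length $n$ is at least $(1-s')n - O(\log n)$ infinitely often, giving $\limsup_n \AVG_l(X\ur n) \ge 1-s' > 1-s$. Hence $s \in \mathscr{G}(\Gamma)$, and taking the infimum over such $s$ gives the inequality. (One must also handle $s \ge 1$ trivially, since then $1-s \le 0$ and the constant-$\no$ learner works.)

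For the converse $\inf\mathscr{G}(\Gamma) \le \dim_H(\Gamma)$: given $s \in \mathscr{G}(\Gamma)$ with learner $l$ satisfying (1)--(3), I would go the other way and manufacture an $s'$-gale (for any $s' > s$) that succeeds on $\Gamma$, then invoke Lutz's characterization to conclude $\dim_H(\Gamma) \le s' $ and let $s' \downarrow s$. The natural candidate is the martingale $d'(w) = 2^{\sum_{i\le |w|} l(w\ur i)} \cdot \lambda(\{Y : Y \ur |w| = w, \text{ and } l \text{ has said } \yes \text{ the same number of times along } w\})$-type object; more cleanly, define $d'(w)$ to be $2^{|w|}$ times the $\lambda$-measure of the set of $Y \in C_w$ on which $l$ says $\yes$ at least as often along every prefix extending $w$ as it does along $w$ --- this is a martingale by construction because the measure condition (2) is exactly a supermartingale/Ville statement about the ``$\yes$-count'' process. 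Then converting $d'$ to an $s'$-gale via $d(w) = d'(w) 2^{-(1-s')|w|}$ (Theorem~\ref{galeswap}), the $\limsup$ in (3) forces $\log_2 d(X\ur n) = \#\{\yes \text{ along } X\ur n\} - (1-s')n \to \infty$ along a subsequence whenever $s' > s$, so $d$ succeeds on $X$.

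The main obstacle I anticipate is constructing the martingale from the learner so that it is genuinely a martingale (exact equality $d'(w)=\frac12 d'(w0)+\frac12 d'(w1)$, not merely a supermartingale) while still capitalizing on the $\yes$-count. The cleanest route is probably to first observe that condition (2) says the function $Y \mapsto$ (number of $\yes$'s along $Y$) is dominated, in distribution, by a process for which $\P(\text{count} \ge n) \le 2^{-n}$; one then defines, for each threshold $n$, the set $U_n = \{Y : l \text{ says } \yes \text{ at least } n \text{ times}\} = \llbracket A_n\rrbracket$ for a prefix set $A_n$ with $\lambda(A_n)\le 2^{-n}$, and builds the martingale as $d' = \sum_{n\ge 1} d'_n$ where $d'_n$ is the canonical martingale betting that $Y \in U_n$ (value $2^{|w|}\lambda(C_w \cap U_n)/\lambda(C_w)$ suitably normalized). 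Summability of $\lambda(A_n)$ keeps $d'(\lambda)$ finite, and on $X$ the martingale $d'_n$ reaches a large value for every $n$ up to about the $\yes$-count, so $\sup_n d'(X\ur n)$ grows like $2^{\#\yes}$. Getting the bookkeeping here --- especially making condition (1), uniform $\Delta$-computability of path averages rather than of $l$ itself, interact correctly with lower semicomputability of the resulting gale (which matters for the later effective versions, not for this theorem) --- is the delicate part, but for the purely classical statement the absence of any computability constraint makes it routine once the martingale is correctly defined.
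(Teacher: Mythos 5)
Your first direction (constructing a learner from a gale) follows the paper's idea of ``cashing in'' doublings of the rescaled martingale $d' = d\cdot 2^{(1-s')|w|}$, and the informal description --- say $\yes$ at the first doubling, reset, wait for the next doubling from that point --- is correct and matches the paper's $\gamma_d$. However, your ``more precisely'' formula, $l(w)=\yes$ iff $\lfloor\log_2 d'(w)\rfloor > \lfloor\log_2 d'(\mathrm{parent}(w))\rfloor$, does not implement that description and would violate the measure condition of Definition~\ref{slearn}: a martingale that reaches a value just above $2^k$ (paying the Ville cost $2^{-k}$ once), dips just below $2^k$, and recrosses $2^k$ would trigger repeated $\yes$'s at nearly constant conditional probability, so $\lambda\{\#\yes\ge n\}$ need not decay geometrically. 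The point is that the learner must say $\yes$ only on \emph{new} capital thresholds (as in Eq.~\ref{gale-lf}, which references the last prefix at which a $\yes$ was issued), not on every parent-to-child floor increment.

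Your second direction departs from the paper in an interesting way: you convert the learner back into a martingale and then appeal to Lutz's gale characterization of $\dim_H$, whereas the paper argues directly with the Hausdorff outer measure, taking the cover $\hat{A}_k$ of strings whose $\yes$-count first reaches $r+(1-s')|w|$, bounding $|\hat{A}_k^{=n}|\le 2^{-r+s'n}$ from the measure condition and Kraft's inequality, and concluding $H^{s'}(\Gamma)=0$. Your martingale-from-learner route can be made to work, but as written it has a gap: if $d'_n(w)=\lambda(U_n\mid C_w)$ then $\sum_n d'_n(\lambda)$ is finite but each $d'_n$ is bounded by $1$, so $d'=\sum_n d'_n$ grows only linearly in the $\yes$-count, not like $2^{\#\yes}$; if instead you normalize each $d'_n$ to initial capital $1$, the sum diverges. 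The fix is to use partial weights, e.g.\ $d'=\sum_n 2^{(1-\epsilon)n}\lambda(U_n\mid C_w)$ with $\epsilon>0$ small depending on $s'-s$, which keeps the initial capital summable and still gives $d'(X\ur m)\ge 2^{(1-\epsilon)(1-s)m}$ along a subsequence. The paper's cover argument avoids this bookkeeping entirely, and it also transfers more directly to the effective setting: in Theorem~\ref{algchar} the analogous level sets $T_n$ are combined with Mayordomo's Kolmogorov-complexity characterization rather than a gale, precisely because producing a lower-semicomputable gale from a learner whose \emph{path averages} (not values) are lower-semicomputable is awkward --- a difficulty you yourself flag at the end.

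Finally, note that you have swapped the labels of the two inequalities (the argument producing a learner from a gale establishes $\inf\mathscr{G}(\Gamma)\le\dim_H(\Gamma)$, and the converse argument establishes $\dim_H(\Gamma)\le\inf\mathscr{G}(\Gamma)$), though the arguments themselves are aimed in the right directions.
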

\begin{proof}
Assume $\dim_H(\Gamma) \leq s$. Then for all $s' > s$ there exists an $s'$-gale which succeeds on $\Gamma$, and by Proposition \ref{martingalevsorder} there exists a martingale $d$ which, for every $X \in \Gamma$ succeeds on $X$ against order $h(w) = 2^{(1-s')|w|}$. That is, $d$ doubles its money asymptotically $1-s'$ share of the time. Let

\begin{equation}
\label{gale-lf}
\gamma_d(w) =
\begin{cases}
\yes &\text{ if there exists $w' \sqsubseteq w$ such that $d(w) \geq 2\cdot d(w')$ and}\\ &\text{ $\forall \hat{w}$  satisfying $w' \sqsubset \hat{w} \sqsubset w$, $\gamma_d(\hat{w}) = \no$}\\ 
\no &\text{ otherwise }
\end{cases}
\end{equation}

$\gamma_d$ says $\yes$ every time $d$ doubles its money (attains a new $2^k$ value). The set of sequences on which $\gamma_d$ says $\yes$ infinitely often is the same as the set of sequences on which $d$ wins unbounded money \emph{against $h$}, thus it has measure zero. Thus for any $X \in \Gamma$ there must be infinitely many $X \upharpoonright n$ such that $\gamma_d(X \upharpoonright n)$ has $\yes$ density at least $(1-s')$ at $X \upharpoonright n$.

Let $n \in \N$ and let $B_n = \{w \mid d(w) \geq 2^n \text{ and } \forall v \sqsubseteq w, d(v) < 2^n\}$. $B_n$ is the (prefix-)set of all $w$ at which $d$ has accumulated $2^n$ value for `the first time'. By the Kolmogorov inequality (\cite{Ville39}), $\lambda(\cup_{w \in B_n} C_w) \leq 2^{-n}$. For all $n$,
\begin{equation*}
\begin{split}
A^{\gamma_d}_n &= \{Y \mid \#\{m \mid \gamma_d(Y \upharpoonright m) = \yes\} \geq n\}\\
&\subseteq \{Y \mid \exists k\text{ } d(Y \upharpoonright k) \geq 2^n\}\\
&= \cup_{w \in B_n} C_w,
\end{split}
\end{equation*}

Thus, for all $n$, $\lambda(A^{\gamma_d}_n) \leq \lambda(\cup_{w \in B_n} C_w) \leq 2^{-n}$. Thus $\gamma_d$ $s'$-learns every $X \in \Gamma$.

In the other direction, suppose that for all $s' > s$ there is a learning function which $s'$-learns every $X \in \Gamma$. Let $l$ be such a learning function. We show $\dim_H(\Gamma) \leq s$. Let $k,r$ be any integers. 
Let \[\hat{A}_k = \{w \mid \#\yes(w) \geq r + (1-s')|w| \text{ and } \forall v \sqsubset w, v \notin \hat{A}_k\}.\]

$\hat{A}_k \in \mathcal{A}_k(\Gamma)$ because for all $w \in \hat{A}_k$, $|w| > k$, and $\hat{A}_k$ is a prefix set.

For every $n$, \[\lambda(\{Y \mid \#\yes(Y) \geq r + (1-s')n\}) \leq 2^{-r -(1-s')n}.\]

Thus, the number of strings of length exactly $k$ which can have at least $r+(1-s')k$ $\yes$ answers is at most \[\frac{2^{-r-(1-s')k}}{2^{-k}} = 2^{-r+s'k}.\]

\[H^{s'}_k(\Gamma) \leq \sum_{w \in \hat{A}_k} 2^{-s'|w|} \leq \sum_{n =k}^\infty |\hat{A}^{=n}_k| 2^{-s'n} \leq 2^{-r+s'k}2^{-s'k} =2^{-r},\] where the third inequality is due to the Kraft inequality \cite{kraft1949device,oCovTho06}, because $\hat{A}_k$ is a prefix set.

Thus 

\begin{equation*}
    \begin{aligned}
    H^{s'}(\Gamma) &= \lim_{k \ra \infty} \inf_{A \in \mathcal{A}_k(\Gamma)} \sum_{w \in A} 2^{-s|w|}\\
    &\leq \lim_{k \ra \infty} \sum_{w \in \hat{A}_k} 2^{-s|w|}\\ &= 0.
    \end{aligned}
\end{equation*}

Thus $\dim_H(\Gamma) \leq s$. 
\end{proof}

\section{Algorithmic dimension via learning functions}
In this section, we show that algorithmic $s$-learning characterizes algorithmic dimension.

Recall that a learning function \textit{algorithmically $s$-learns} a set $\Gamma \subseteq \{0,1\}^\infty$ if it satisfies the definition of $s$-learning with $\Delta = \Sigma^0_1$. In other words, we require that the path-averages of $l$ on all strings $w$ are uniformly lower semicomputable real numbers.


Before we discuss the proof that algorithmic $s$-learnability characterizes algorithmic dimension, we note why it is at least \textit{seemingly difficult} to directly lower semi-compute the learning function described in Theorem \ref{hdimlf}. For one, a suitable notion of lower semicomputability is hard to establish for functions mapping to $\{0,1\}$. Secondly, the learning function $\gamma_d$ in the proof of Theorem \ref{hdimlf} relies on being able to place a $\yes$ answer every time the `underlying' martingale doubles its money for the first time. That is, every time it achieves a new value $2^k$ for the first time along some sequence. Though checking whether this happens \textit{eventually} is lower semicomputable, checking whether it has happened at a particular $w$ for the \textit{first time} is not. Instead, we can simply require that path-averages be lower semicomputable. Thus, the conditions of algorithmic $s$-learnability coincide with the existence of lower semi-computable martingales succeeding against exponential orders.

For a given martingale $d$, we define the learning function $\gamma_d$ as in Theorem \ref{hdimlf}. Note that if $d$ is lower semi-computable with computable witness $\hat{d}$, then the path averages of $\gamma_d$ are lower semi-computable uniformly in $w$ via the computable witness below (Algorithm \ref{alg:path_avg}).\\

\begin{algorithm}
\caption{$\AVG_l$}
\label{alg:path_avg}
\begin{algorithmic}[1]
\STATE Input $w, k$:
\STATE Compute all the values $\hat{d}(\lambda,k),\ldots,\hat{d}(w,k)$.
\STATE Compute $m_{w,k} = \max_{w' \sqsubseteq w} \left\{ \log_2 \left(\hat{d}(w',k)\right)\right\}$
\RETURN $\frac{m_{w,k}}{|w|}$
\end{algorithmic}
\end{algorithm}

As a result, the set \[T = \{w \mid \AVG_{\gamma_d}(w) \geq 1-s\}\] is computably enumerable, and so are all of the slices \[T_m = \{w \mid \AVG_{\gamma_d}(w) \geq 1-s \text{ and } |w| = m \}.\]

Let $\mathscr{G}_{\constr}(\Gamma) = \{s \in [0,\infty) \mid $ there exists a learning function $l$ which algorithmically $s$-learns every $X \in \Gamma\}$. 

We now prove our main theorem.

\begin{theorem}\label{algchar} For all $\Gamma \subseteq \{0,1\}^\infty$,
    \[\adim(\Gamma) = \inf  \mathscr{G}_{\constr}(\Gamma) .\]
\end{theorem}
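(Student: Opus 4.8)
The plan is to prove the two inequalities $\inf\mathscr{G}_{\constr}(\Gamma)\le\adim(\Gamma)$ and $\adim(\Gamma)\le\inf\mathscr{G}_{\constr}(\Gamma)$ separately, using the defining characterization of $\adim$ by lower semicomputable gales for the first, and Mayordomo's characterization $\dim(X)=\liminf_{n}K(X\upharpoonright n)/n$ together with absolute stability $\adim(\Gamma)=\sup_{X\in\Gamma}\dim(X)$ for the second. Both of the sets whose infima appear above are upward closed --- for $s$-learnability this is immediate from condition~(3) of Definition~\ref{slearn}, and for gales it follows from Theorem~\ref{galeswap}, since multiplying an $s$-gale by $2^{(s'-s)|\cdot|}$ yields an $s'$-gale with the same success set --- so in each direction it suffices to fix an arbitrary $s'$ strictly above the relevant infimum, build the required object at level $s'$, and then let $s'$ decrease to that infimum.

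For $\inf\mathscr{G}_{\constr}(\Gamma)\le\adim(\Gamma)$, fix $s'>\adim(\Gamma)$ and a lower semicomputable $s'$-gale $d$ succeeding on $\Gamma$. By Theorem~\ref{galeswap}, $d'(w)=d(w)2^{(1-s')|w|}$ is a lower semicomputable martingale, and Proposition~\ref{martingalevsorder} gives $\limsup_{n}d'(X\upharpoonright n)/2^{(1-s')n}=\infty$ for every $X\in\Gamma$. Form the learner $\gamma_{d'}$ exactly as in the proof of Theorem~\ref{hdimlf} (equation~\eqref{gale-lf}): it emits a $\yes$ the first time $d'$ attains each successive power of two, so that $\#\{i\le n:\gamma_{d'}(w\upharpoonright i)=\yes\}=\lfloor\log_2\max_{w'\sqsubseteq w}d'(w')\rfloor$ for $|w|=n$. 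I then verify the three conditions of Definition~\ref{slearn} at level $s'$: the path averages are uniformly lower semicomputable by Algorithm~\ref{alg:path_avg} applied to a computable witness for $d'$; the event $\#\{i:\gamma_{d'}(Y\upharpoonright i)=\yes\}\ge n$ forces $\sup_k d'(Y\upharpoonright k)\ge 2^{n}$, so Kolmogorov's inequality for the martingale $d'$ bounds its measure by $2^{-n}$; and at the infinitely many $n$ witnessing the $\limsup$ one has $\log_2 d'(X\upharpoonright n)\ge(1-s')n$, hence $\AVG_{\gamma_{d'}}(X\upharpoonright n)\ge(1-s')-1/n$, so $\limsup_n\AVG_{\gamma_{d'}}(X\upharpoonright n)\ge 1-s'$. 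Thus $\gamma_{d'}$ algorithmically $s'$-learns $\Gamma$, so $s'\in\mathscr{G}_{\constr}(\Gamma)$.

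For $\adim(\Gamma)\le\inf\mathscr{G}_{\constr}(\Gamma)$, let $l$ algorithmically $s$-learn $\Gamma$ and fix $X\in\Gamma$ and $s'>s$. Condition~(3) of Definition~\ref{slearn} yields infinitely many $n$ with $\#\{i\le n:l(X\upharpoonright i)=\yes\}\ge(1-s')n$, and for each such $n$ we have $X\upharpoonright n\in T_n=\{w:|w|=n,\ \AVG_l(w)\ge 1-s'\}$. The slices $T_n$ are computably enumerable uniformly in $n$ --- this is exactly where lower semicomputability of the path averages is used, via the equivalence $\#\yes(l,w)\ge\lceil(1-s')n\rceil\iff\AVG_l(w)>(\lceil(1-s')n\rceil-1)/n$, a $\Sigma^0_1$ condition. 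By condition~(2), $\llbracket T_n\rrbracket\subseteq\{Y:\#\{i:l(Y\upharpoonright i)=\yes\}\ge\lceil(1-s')n\rceil\}$ has measure at most $2^{-(1-s')n}$, whence $|T_n|\le 2^{s'n}$; so $X\upharpoonright n$ is specified by $n$ together with its index in the enumeration of $T_n$, giving $K(X\upharpoonright n)\le s'n+O(\log n)$ and therefore $\liminf_n K(X\upharpoonright n)/n\le s'$. By Mayordomo's theorem $\dim(X)\le s'$; letting $s'\downarrow s$ gives $\dim(X)\le s$, and by absolute stability $\adim(\Gamma)=\sup_{X\in\Gamma}\dim(X)\le s$.

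The crux --- and essentially the only content beyond the classical Theorem~\ref{hdimlf} --- is getting the lower-semicomputability bookkeeping exactly right on both sides. In the first direction one must check that ``$d'$ has reached the next power of two'' is a genuinely $\Sigma^0_1$ event when $d'$ is only available through increasing rational approximations, so that $\AVG_{\gamma_{d'}}$ is honestly uniformly lower semicomputable; this is the role of Algorithm~\ref{alg:path_avg} and the surrounding discussion, including the observation that one cannot decide whether a martingale crosses a given threshold \emph{for the first time} at a particular $w$. In the second direction one must confirm that an increasing rational approximation to $\AVG_l(w)$ already suffices to enumerate the slices $T_n$ uniformly, which works because $\#\yes(l,w)$ is an integer, so ``$\AVG_l(w)$ is at least a prescribed rational'' collapses to a strict inequality and hence is $\Sigma^0_1$. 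Everything else is a re-run, at the $\Sigma^0_1$ level, of the gale-to-cover and cover-to-gale translations already carried out in the proof of Theorem~\ref{hdimlf}.
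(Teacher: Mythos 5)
Your proposal is correct and follows essentially the same two-sided argument as the paper: transform an lsc $s'$-gale into a martingale, feed it through the $\gamma_d$ construction from Theorem~\ref{hdimlf} and Algorithm~\ref{alg:path_avg}, and verify the three conditions of Definition~\ref{slearn}; in the other direction, use the c.e.\ slices $T_n$ with $|T_n|\le 2^{s'n}$ to bound Kolmogorov complexity and invoke Mayordomo's theorem. The only differences are cosmetic: you use prefix complexity $K$ and carry the $O(\log n)$ term explicitly where the paper uses plain complexity $C$ and a length-encoding trick, and you state the absolute-stability step explicitly where the paper leaves it implicit.
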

\begin{proof}

Let $s = \inf \mathscr{G}_{\constr}(\Gamma) $. We will show $\dim(\Gamma) \leq s$.

    Let $s' > s$ and let $l$ be a learning function that algorithmically $s'$-learns every $X \in \Gamma$.

    \cite{downey2010algorithmic} Theorem 13.3.4 (\cite{jMayo02}, Theorem 3.1) states that $\dim(X) = \liminf_n \frac{C(X \upharpoonright n)}{n}$ where $C$ is the \emph{plain Kolmogorov complexity}.

    Let $T = \cup T_n$ where $T_n = \{ w \text{ $\mid$ }|w| = n \text{ and } \AVG_l(w) \geq 1-s\}$. $T$ is computably enumerable due to the first condition of $s'$-learnability.

    For each $n$, $\lambda(\cup_{w \in T_n} C_w) \leq 2^{-(1-s')n}$ as a result of the measure condition of $s'$-learnability, since every sequence $Y$ with at least $n(1-s)$ $\yes$ answers at length $n$ is in the set of sequences which have at least $n(1-s)$ $\yes$ answers total, and the measure of the latter set is at most $2^{-(1-s')n}$. Thus, since each $w \in T_n$ has measure $2^{-n}$ and all are disjoint, we have $|T_n| \leq 2^{s'n}$. As a result, in plain Kolmogorov complexity terms, we only need to supply at most $s'n$ bits to identify an element of $T$ living in $T_n$ (we get $n$ `for free' as long as we supply \textit{exactly} $s'n$ bits, and use them to identify the slice $T_n$ of $T$).

    This means that for all $w \in T$, $C(w) \leq s'|w|$. Note that for every $X \in \
    \Gamma$, there are infinitely many $n$ so that $X \upharpoonright n \in T$.

    It follows, then, that for every $X \in \Gamma$, \[\dim(X) = \liminf_n \frac{C(X \upharpoonright n)}{n} \leq \frac {s'n}{n} = s',\] and thus - since $s' > s$ was arbitrary - $\dim(\Gamma) \leq s$.

    In the other direction, we show that if $\dim(\Gamma) \leq s$, then for every $s' >s$, $\Gamma$ is algorithmically $s'$-learnable.

Again, let $s' > s$ and let $d$ be a martingale which succeeds against order $2^{(1-s')n}$ on every $X \in \Gamma$. Define the learning function $\gamma_d$, as in Theorem \ref{hdimlf} ( Eq. \ref{gale-lf}), to say $\yes$ every time $d$ doubles its money (attains a new $2^k$ value) for the first time along each path.

We have established that the values $\AVG_{\gamma_d}(w)$ are uniformly lower-semicomputable when $d$ is lower-semicomputable. We also know from the proof of Theorem \ref{hdimlf} that $\gamma_d$ satisfies the measure condition and that $\limsup_{n \ra \infty} \AVG(l,X \upharpoonright n) \geq 1-s',$ for every $X \in \Gamma$. 
\end{proof}

\section{Strong algorithmic dimension via learning functions}

In this section, we characterize the packing dimension as well as the strong algorithmic dimension of a set of binary sequences in terms of strong $s$-learning and strong algorithmic $s$-learning, respectively.


For any $\Gamma \subseteq \zinf$, let $\Gamma \upharpoonright n = \{ w \mid |w| = n \text{ and $w$ is a prefix of some }v \in \Gamma\}$.

\begin{mydef} \cite{downey2010algorithmic} For any $\Gamma \subseteq \zinf$, let the \emph{upper box-counting dimension} of $\Gamma$ be \[\overline{\dim_B(\Gamma)} = \limsup_n \frac{\log |\Gamma \upharpoonright n|}{n}.\]
\end{mydef}

\begin{theorem} \label{DimKolm} \cite{AHLM07} For every $X \in \zinf$, \[\Dim(X) = \limsup_n \frac{C(X \upharpoonright n)}{n}.\]
    
\end{theorem}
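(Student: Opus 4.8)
The plan is to prove the two inequalities
\[
\Dim(X) \;\le\; \limsup_n \frac{C(X \upharpoonright n)}{n}
\qquad\text{and}\qquad
\Dim(X) \;\ge\; \limsup_n \frac{C(X \upharpoonright n)}{n}
\]
separately, by following the gale/complexity correspondence Mayordomo used for ordinary dimension but replacing ``succeeds'' by ``succeeds strongly'' throughout. The structural reason this produces a $\limsup$ rather than a $\liminf$ is that strong success of a gale $d$ along $X$ says $d(X\upharpoonright n)\to\infty$, i.e.\ $d$ is large on \emph{all} sufficiently long prefixes of $X$, which is exactly the quantifier pattern matching $\limsup$; ordinary success (largeness along \emph{infinitely many} prefixes) is what matches $\liminf$ in Mayordomo's theorem.

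\smallskip
\noindent\textbf{$\Dim(X)\ge\limsup$.} I would fix a rational $s>\Dim(X)=\aDim(\{X\})$; by definition there is a lower semicomputable $s$-gale $d$ with $d(\lambda)=1$ that succeeds strongly on $X$. An easy induction on the gale equation $d(w)=2^{-s}\bigl(d(w0)+d(w1)\bigr)$ gives $\sum_{|w|=n}d(w)=2^{sn}$ for every $n$. Hence, for each $k$, the set $S_{n,k}=\{w:|w|=n,\ d(w)>2^{k}\}$ has fewer than $2^{sn-k}$ elements, and it is computably enumerable uniformly in $(n,k)$ since $d$ is lower semicomputable. Thus any $w\in S_{n,k}$ is determined by the triple $(n,k,i)$, where $i<2^{sn-k}$ is the index of $w$ in the enumeration of $S_{n,k}$; encoding $n$ and $k$ self-delimitingly costs only $O(\log n)$ bits (as $k< sn$ whenever $S_{n,k}\neq\emptyset$), so $C(w)\le sn-k+O(\log n)$. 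Strong success gives an $N$ with $d(X\upharpoonright n)>1$, i.e.\ $X\upharpoonright n\in S_{n,0}$, for all $n\ge N$, whence $C(X\upharpoonright n)\le sn+O(\log n)$ and $\limsup_n C(X\upharpoonright n)/n\le s$. Letting $s\downarrow\Dim(X)$ through rationals finishes this direction.

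\smallskip
\noindent\textbf{$\Dim(X)\le\limsup$.} Put $\beta=\limsup_n C(X\upharpoonright n)/n$, fix rationals $\beta<t<s$, and for each $n$ let $G_n=\{w:|w|=n,\ C(w)\le tn\}$, which is c.e.\ uniformly in $n$ and has at most $2^{tn+1}$ elements. I would build a lower semicomputable subprobability \emph{measure} $\nu$ on $\zinf$ by $\nu=\sum_{n\ge 0}\frac{6}{\pi^{2}(n+1)^{2}}\,\nu_n$, where $\nu_n$ is the additive measure that, as strings are enumerated into $G_n$, places mass $2^{-tn-1}$ on the cylinder of each of them (total mass $\le 2^{tn+1}\cdot 2^{-tn-1}=1$) and is spread uniformly inside those cylinders; then $\nu(w')\ge \frac{6}{\pi^{2}(n+1)^{2}}\,2^{-tn-1}$ for every $w'\in G_n$. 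Now $d(w)=\nu(w)\,2^{s|w|}$ is a lower semicomputable $s$-gale, because for Lebesgue measure the $s$-gale equation for $d$ reduces exactly to additivity of $\nu$. For $w'\in G_n$ we get $d(w')\ge \frac{3}{\pi^{2}(n+1)^{2}}\,2^{(s-t)n}$, which tends to $\infty$ with $n$; and since $\beta<t$ we have $X\upharpoonright n\in G_n$ for all sufficiently large $n$, so $d(X\upharpoonright n)\to\infty$, i.e.\ $d$ succeeds strongly on $X$. Therefore $\Dim(X)\le s$, and letting $s\downarrow\beta$ through rationals gives $\Dim(X)\le\beta$.

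\smallskip
\noindent The routine parts are the induction $\sum_{|w|=n}d(w)=2^{sn}$, the check that $\nu(w)2^{s|w|}$ satisfies the gale equation, and the constant-overhead counting for $C$. The step I expect to need the most care is the measure construction in the second direction: one must make $\nu$ genuinely additive (so that $d$ is an exact gale, not merely a supergale) and lower semicomputable while still giving each $w'\in G_n$ mass at least $2^{-tn}$ up to a polynomial-in-$n$ factor — the naive weighting $\sum_n 2^{-n}\nu_n$ would cost an extra $2^{-n}$ and destroy the exponent $(s-t)n$, which is why the weights across $n$ must decay only polynomially.
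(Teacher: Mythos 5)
The paper itself does not prove this statement; Theorem~\ref{DimKolm} is cited verbatim from \cite{AHLM07}, so there is no in-paper proof to compare against. Your argument is a correct reconstruction of the standard proof from that reference (and it mirrors Mayordomo's argument for the $\liminf$/$\dim$ case with ``succeeds'' everywhere replaced by ``succeeds strongly,'' exactly as you say). In the first direction the Kolmogorov/Markov bound $|S_{n,k}|<2^{sn-k}$ from $\sum_{|w|=n}d(w)=2^{sn}$, the uniform enumerability of $S_{n,k}$ from lower semicomputability of $d$, and the eventual membership $X\upharpoonright n\in S_{n,0}$ from strong success all check out, giving $C(X\upharpoonright n)\le sn+O(\log n)$ for all large $n$. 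In the second direction your $\nu$ is genuinely additive (it is a $0$-gale in the sense of Theorem~\ref{galeswap}), is lower semicomputable because the $G_n$ are uniformly c.e.\ and the weights $6/(\pi^2(n+1)^2)$ are computable, and the resulting $d(w)=\nu(w)2^{s|w|}$ is an exact lower semicomputable $s$-gale with $d(X\upharpoonright n)\ge \frac{3}{\pi^2(n+1)^2}2^{(s-t)n}\to\infty$ for all large $n$, so it succeeds strongly. Your remark that the weights across $n$ must decay only subexponentially (polynomially suffices) is the right thing to be careful about: an exponential weight $2^{-n}$ would force $s-t>1$, which would not let $s$ approach $\beta$. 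The only cosmetic point worth noting is that one should also assume $d(\lambda)>0$ (hence WLOG $d(\lambda)=1$) in the first direction, which is harmless since a gale with $d(\lambda)=0$ is identically $0$ and cannot succeed.
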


Let $\mathscr{G}_{str}(\Gamma) = \{s \in [0,\infty) \mid$ there exists a learning function $f$ which strongly $s$-learns every $ X \in \Gamma\}.$

\begin{theorem} Let $\Gamma \subseteq \zinf$. Then,
   \[\dim_P(\Gamma) = \inf \mathscr{G}_{str}(\Gamma). \]
\end{theorem}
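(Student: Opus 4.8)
\emph{Proof strategy.} The plan is to prove the two inequalities $\inf\mathscr{G}_{str}(\Gamma)\le\dim_P(\Gamma)$ and $\dim_P(\Gamma)\le\inf\mathscr{G}_{str}(\Gamma)$ by running the argument of Theorem~\ref{hdimlf} with ``succeeds'' replaced by ``succeeds strongly,'' $\limsup$ by $\liminf$, and ``$s$-learns'' by ``strongly $s$-learns'' throughout. Both directions reuse the learning function $\gamma_d$ of Equation~\eqref{gale-lf} and the string-counting estimate from that proof; what is new is (a) checking that the frequency condition survives the passage from $\limsup$ to $\liminf$, and (b) replacing the single-cover / Hausdorff-measure argument of the converse direction by something that respects the countably stable nature of packing dimension.

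For $\inf\mathscr{G}_{str}(\Gamma)\le\dim_P(\Gamma)$, fix $s'>\dim_P(\Gamma)$. By definition of $\dim_P$ (and monotonicity of strong success in $s$, via Theorem~\ref{galeswap}) there is an $s'$-gale succeeding strongly on $\Gamma$, hence, by Theorem~\ref{galeswap} (the strong-success form of Proposition~\ref{martingalevsorder}), a martingale $d$ with $\liminf_n d(X\upharpoonright n)\,2^{-(1-s')n}=\infty$ for every $X\in\Gamma$. Take $\gamma_d$ as in Equation~\eqref{gale-lf}. The measure condition for $\gamma_d$ is exactly the one already verified in the proof of Theorem~\ref{hdimlf} (Kolmogorov inequality on the prefix sets $B_n$), so it still holds. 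For the frequency condition, the number of $\yes$ answers $\gamma_d$ gives along $X\upharpoonright n$ equals $\lfloor\log_2\max_{m\le n}d(X\upharpoonright m)\rfloor$; strong success against order $2^{(1-s')n}$ means that for every $M$ we eventually have $d(X\upharpoonright n)\ge M\cdot 2^{(1-s')n}$, so $\AVG_{\gamma_d}(X\upharpoonright n)\ge\frac{(1-s')n+\log_2 M-1}{n}$ for all large $n$, whence $\liminf_n\AVG_{\gamma_d}(X\upharpoonright n)\ge 1-s'$. Thus $\gamma_d$ strongly $s'$-learns $\Gamma$, so $s'\in\mathscr{G}_{str}(\Gamma)$; letting $s'\downarrow\dim_P(\Gamma)$ gives the inequality.

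For the converse $\dim_P(\Gamma)\le\inf\mathscr{G}_{str}(\Gamma)$, fix $s'\in\mathscr{G}_{str}(\Gamma)$ with a learning function $l$ strongly $s'$-learning $\Gamma$, and fix $\delta>0$. Since $\liminf_n\AVG_l(X\upharpoonright n)\ge 1-s'$ for each $X\in\Gamma$, setting $\Gamma^\delta_m=\{X\in\Gamma\mid \#\yes(X\upharpoonright n)\ge(1-s'-\delta)n\text{ for all }n\ge m\}$ gives $\Gamma=\bigcup_m\Gamma^\delta_m$. For $n\ge m$, every $w\in\Gamma^\delta_m\upharpoonright n$ has $\#\yes(w)\ge(1-s'-\delta)n$; since $\#\yes$ is monotone along prefixes, the cylinder $C_w$ lies inside $\{Y\mid \#\yes(Y)\ge(1-s'-\delta)n\}$, which by the measure condition of strong $s'$-learnability has $\lambda$-measure at most $2^{-(1-s'-\delta)n}$, so disjointness of these cylinders (each of measure $2^{-n}$) forces $|\Gamma^\delta_m\upharpoonright n|\le 2^{(s'+\delta)n}$. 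Hence $\overline{\dim_B(\Gamma^\delta_m)}\le s'+\delta$ for every $m$. Combining this with the standard relation $\dim_P(\Delta)\le\overline{\dim_B(\Delta)}$ and the countable stability $\dim_P(\bigcup_m\Delta_m)\le\sup_m\dim_P(\Delta_m)$ --- the latter immediate from the gale definition, as an averaged gale $\sum_m 2^{-m}d_m$ inherits strong success --- we get $\dim_P(\Gamma)\le s'+\delta$; letting $\delta\downarrow 0$ and then $s'\downarrow\inf\mathscr{G}_{str}(\Gamma)$ completes the proof.

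\emph{Main obstacle.} The crux is this converse direction: in the Hausdorff case one gets away with a single prefix cover $\hat A_k$ of $\Gamma$, but packing dimension is not computed from a single cover --- it is the countably stabilized upper box dimension --- so one must pass through the decomposition $\Gamma=\bigcup_m\Gamma^\delta_m$ and invoke $\dim_P\le\overline{\dim_B}$ together with the countable stability of $\dim_P$. If one prefers to stay entirely inside the gale framework of the excerpt, the alternative is to build a strongly-succeeding gale directly from $l$: let $U_n$ be the prefix set of strings where $\#\yes$ first reaches $n$, put $d_n(w)=2^n\lambda(\llbracket U_n\rrbracket\mid C_w)$ (a martingale with $d_n(\lambda)\le 1$ by the measure condition), and set $d=\sum_{n\ge 1}n^{-2}d_n$; then $d$ is a martingale of finite initial capital with $d(w)\ge 2^{\#\yes(w)}/\#\yes(w)^2$, so on each $X\in\Gamma$ it succeeds strongly against order $2^{(1-s'-\epsilon)n}$, and Theorem~\ref{galeswap} converts it to an $(s'+\epsilon)$-gale succeeding strongly on $\Gamma$. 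The delicate point in this route is extracting capital growth $2^{\#\yes(w)}$ from the raw measure bound while keeping total initial capital finite, which is precisely what the $n^{-2}$ weights buy.
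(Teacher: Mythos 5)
Your proof is correct and takes essentially the same approach as the paper's: both directions pass through the learner $\gamma_d$ of Eq.~\ref{gale-lf} and the counting bound $|T_n|\le 2^{sn}$, and your decomposition $\Gamma=\bigcup_m\Gamma^\delta_m$ is exactly the paper's $Z_i=\bigcap_{j\ge i}\llbracket T_j\rrbracket$ (intersected with $\Gamma$), after which you apply $\dim_P\le\overline{\dim_B}$ and countable stability just as the paper does. The alternative gale-only construction you sketch at the end (summing the martingales $d_n$ built from the prefix sets $U_n$, weighted by $n^{-2}$) is not the paper's route, but it is a valid and self-contained way to avoid invoking $\overline{\dim_B}$ as an intermediary.
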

\begin{proof}

    First, we show that if, for arbitrary $s' > s$, $l$ is a learning function that strongly $s'$-succeeds on $\Gamma$, then $\dim_P(\Gamma) \leq s$. 

    The proof is much the same as \cite{AHLM07} and \cite{downey2019algorithmic} Theorem 13.11.9, except we start with a learning function instead of a gale. Assume the hypothesis. Let $\hat{s} > s'$ be arbitrary.

    Let \[T_n = \{w \mid \AVG_l(w) \geq 1-\hat{s} \text{ and } |w| = n\}\] be the set of strings of length $n$ on which $l$ achieves the requisite density. Then, for all $X \in \Gamma,$ and for all but finitely many $n \in \N$, $X \in \cup_{w \in T_n} C_w$, i.e. \[\Gamma \subseteq \cup_i \cap_{j \geq i} \llbracket T_j \rrbracket.\]

    Let $Z_n = \cap_{m \geq n} \llbracket T_m \rrbracket$. Then, $X \in Z_n$ means that after the $n$th prefix, all remaining prefixes of $X$ have density at least $1-\hat{s}$. 

    It suffices - by the countable stability of $\dim_P$ and the fact that $\dim_P \leq \overline{\dim_B}$ (see \cite{downey2010algorithmic} 13.11.3) - to show that $\overline{\dim_B}(Z_n) \leq \hat{s}$, for all n. While the original proof uses Kolmogorov's inequality, we are not dealing with a gale. We have shown in the proof of Theorem \ref{algchar} that if $l$ is an $\hat{s}$-learner, then the set $T_n$ satisfies $|T_n| \leq 2^{\hat{s}n}.$

    Then, \[\overline{\dim_B(Z_i)} = \limsup_n \frac{\log |Z_i \upharpoonright n|}{n} \leq \limsup_n \frac{\log |T_n|}{n} \leq \hat{s}.\]

    Since $\hat{s} > s' >s$ are arbitrary, $\dim_P(\Gamma) \leq \overline{\dim_B}(\Gamma) = \overline{\dim_B}(\cup Z_i) \leq s.$

    In the other direction, simply note that if $\dim_P(\Gamma) \leq s$ then for every $s' > s$ there exists a martingale $d_{s'}$ which strongly $s'$-succeeds against order $2^{(1-s')n}$ on every $X \in \Gamma$, and the frequency with which the martingale doubles its money is always eventually bounded below. Thus, a learner defined as in Eq. \ref{gale-lf} will have the desired $\yes$ density on the same sequences.
\end{proof}

Let $\mathscr{G}^{str}_{\constr}(\Gamma) = \{s \in [0,\infty) \mid $ there exists a learning function $l$ which strongly algorithmically $s$-learns every $X \in \Gamma\}$. 

\begin{theorem}Let $\Gamma \subseteq \zinf$. Then,
   \[\aDim(\Gamma) = \inf \mathscr{G}^{str}_{\constr}(\Gamma). \]
    
\end{theorem}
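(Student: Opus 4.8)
The plan is to run the proof of Theorem~\ref{algchar} again, but on the ``strong'' side: replace $\liminf_n C(X\upharpoonright n)/n$ by $\limsup_n C(X\upharpoonright n)/n$, replace ``infinitely often'' by ``for all but finitely many $n$'' throughout, and otherwise reuse the same $\gamma_d$ machinery and Kolmogorov--complexity bookkeeping, in the spirit of the preceding theorem on classical packing dimension. As in the other characterizations, I would prove $\aDim(\Gamma)\le\inf\mathscr{G}^{str}_{\constr}(\Gamma)$ and $\inf\mathscr{G}^{str}_{\constr}(\Gamma)\le\aDim(\Gamma)$ separately, in each case driving an arbitrary $s'$ on the relevant side down to the infimum.

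For $\aDim(\Gamma)\le\inf\mathscr{G}^{str}_{\constr}(\Gamma)$: fix $s'\in\mathscr{G}^{str}_{\constr}(\Gamma)$, a learning function $l$ that strongly algorithmically $s'$-learns every $X\in\Gamma$, and any $\hat s>s'$. Exactly as in Theorem~\ref{algchar}, put $T_n=\{w\mid |w|=n,\ \AVG_l(w)\ge 1-\hat s\}$ and $T=\bigcup_nT_n$; condition (1) of $s'$-learnability makes $T$ computably enumerable uniformly in $n$, and condition (2) gives $\lambda(\llbracket T_n\rrbracket)\le 2^{-(1-\hat s)n}$, hence (disjoint length-$n$ cylinders) $|T_n|\le 2^{\hat s n}$, hence $C(w)\le \hat s|w|+o(|w|)$ for every $w\in T$. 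The one real change from Theorem~\ref{algchar} is that strong $s'$-learning gives $\liminf_n\AVG_l(X\upharpoonright n)\ge 1-s'>1-\hat s$, so for every $X\in\Gamma$ we have $X\upharpoonright n\in T_n$ for \emph{all but finitely many} $n$; by Theorem~\ref{DimKolm} this forces $\Dim(X)=\limsup_n C(X\upharpoonright n)/n\le\hat s$. Since $X\in\Gamma$ was arbitrary and $\aDim$, like $\adim$, is absolutely stable \cite{AHLM07}, $\aDim(\Gamma)=\sup_{X\in\Gamma}\Dim(X)\le\hat s$; letting $\hat s\downarrow s'$ yields $\aDim(\Gamma)\le s'$.

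For $\inf\mathscr{G}^{str}_{\constr}(\Gamma)\le\aDim(\Gamma)$: fix any $s'>\aDim(\Gamma)$. If $s'\ge 1$ the constant learner $l\equiv\no$ already witnesses $s'\in\mathscr{G}^{str}_{\constr}(\Gamma)$, so we may assume $\aDim(\Gamma)<s'<1$. By the definition of $\aDim$ together with Theorems~\ref{galeswap} and~\ref{martingalevsorder} in their strong-success forms (just as in the preceding theorem on classical packing dimension), there is a lower semicomputable martingale $d$ with $\liminf_n d(X\upharpoonright n)/2^{(1-s')n}=\infty$ for every $X\in\Gamma$. Let $\gamma_d$ be defined by Eq.~\ref{gale-lf}. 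The proof of Theorem~\ref{algchar} already shows that $\AVG_{\gamma_d}$ is uniformly lower semicomputable via Algorithm~\ref{alg:path_avg} (so condition (1) holds) and that $\gamma_d$ satisfies the measure condition (condition (2)). For condition (3): since the running maximum of $d$ along $X$ is at least $2^{(1-s')n}$ for all but finitely many $n$, the same count of ``doublings'' recorded by $\gamma_d$ that gave $\limsup_n\AVG_{\gamma_d}(X\upharpoonright n)\ge 1-s'$ in Theorems~\ref{hdimlf} and~\ref{algchar} now gives $\liminf_n\AVG_{\gamma_d}(X\upharpoonright n)\ge 1-s'$, because here $d$ succeeds strongly against the order $2^{(1-s')n}$ cofinitely rather than merely infinitely often. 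Hence $\gamma_d$ strongly algorithmically $s'$-learns $\Gamma$, so $s'\in\mathscr{G}^{str}_{\constr}(\Gamma)$; letting $s'\downarrow\aDim(\Gamma)$ finishes the argument.

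I expect the main obstacle to be that last step of the second direction: confirming that the $\yes$-density of the $\gamma_d$ of Eq.~\ref{gale-lf} has $\liminf$ at least $1-s'$ on every $X$ on which $d$ succeeds strongly against $2^{(1-s')n}$. This is the only point at which the argument genuinely diverges from Theorem~\ref{algchar}, and it requires re-examining how the number of doublings registered by $\gamma_d$ on a prefix $X\upharpoonright n$ tracks the running maximum of $d$ on that prefix, so that a cofinite lower bound on $d(X\upharpoonright n)/2^{(1-s')n}$ translates into a cofinite lower bound on $\AVG_{\gamma_d}(X\upharpoonright n)$. Everything else is a routine transcription of the proofs of Theorem~\ref{algchar} and the classical packing-dimension characterization, reading $C(X\upharpoonright n)/n$ with $\limsup$ in place of $\liminf$ and reading all success and learning conditions cofinitely instead of infinitely often.
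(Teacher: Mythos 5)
Your proposal is correct and follows essentially the same route as the paper's own (quite terse) proof: the first direction is the $\limsup$/cofinite transcription of Theorem~\ref{algchar} using Theorem~\ref{DimKolm}, and the second direction reuses $\gamma_d$ from Eq.~\ref{gale-lf} together with the observation that a cofinite lower bound on the doubling count yields a $\liminf$ lower bound on the $\yes$-density. You actually spell out a few details the paper leaves implicit (the extra parameter $\hat{s}$, the appeal to absolute stability of $\aDim$, and the trivial case $s'\ge 1$), but the underlying argument is the same.
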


\begin{proof}
    Much like the gale characterization of $\aDim$ resembles the gale characterization of $\adim$ with minor changes (see \cite{AHLM07} and \cite{downey2010algorithmic}, Corollary 13.11.12), this proof closely resembles the proof of Theorem \ref{algchar}. 
    
    We replace the $\liminf$ with a $\limsup$ in order to apply Theorem \ref{DimKolm}, and we replace the observation that infinitely many $n$ satisfy $X \upharpoonright n \in T$ with ``all but finitely many.'' 

    In the other direction, we assume there exists a lower semi-computable martingale $d_{s'}$ which succeeds strongly against order $2^{(1-s')n}$ on every $X \in \Gamma$. A learning function defined in the same way as before (Eq. \ref{gale-lf}) will have the requisite density of $\yes$ answers, and will also have uniformly lower-semicomputable averages at each $w \in \zost$, as established in Section 6.
\end{proof}




\bibliographystyle{plain}
    \bibliography{root.bib}
    
\end{document}